\newcommand{\ca}{{\mathcal A}}
\newcommand{\cp}{{\mathcal P}}
\newcommand{\cs}{{\mathcal S}}
\def\part#1{\operatorname{Part}(#1)}
\newtheorem{definition}{Definition}
\newtheorem{lemma}{Lemma}
\newtheorem{corollary}{Corollary}
\newtheorem{theorem}{Theorem}
\newenvironment{proof}{\begin{trivlist}
 \item[\hspace{\labelsep}{\em\noindent Proof: }]
}{\hfill$\Box$\end{trivlist}}
\newcommand{\intv}[1]{\left \llbracket #1 \right \rrbracket}
\newcommand{\probl}[3]{
\begin{flushleft}
\fbox{
\begin{minipage}{13cm}
\noindent {\sc #1}\\
          {\bf Input:} #2\\
          {\bf Output:} #3
\end{minipage}}
\medskip
\end{flushleft}
}
\begin{document}
\begin{frontmatter}
  \title{Temporal Matching\thanksref{paper}}
\thanks[paper]{Part of the results reported in this paper were presented at \textit{CTW'18}.
  Links to the source code and the GUI of the link stream generator:\\\url{https://github.com/antoinedimitriroux/Temporal-Matching-in-Link-Streams}\\
  \url{https://antoinedimitriroux.github.io}\\
  For financial support, we are grateful to:
  \textit{Thales Communications \& Security}, project TCS.DJ.2015-432;
  \textit{Agence Nationale de la Recherche Technique}, project 2016.0097;
  \textit{Centre National de la Recherche Scientifique}, project INS2I.GraphGPU.}

\author[lip6]{Julien Baste}
\author[lip6]{Binh-Minh Bui-Xuan}
\author[lip6,thales]{Antoine Roux}
\address[lip6]{Laboratoire d'Informatique de Paris 6 (LIP6),
  Centre National de la Recherche Scientifique (CNRS),
  Sorbonne Universit\'e (SU UPMC).}
\address[thales]{Thales Communications \& Security, Thales Group.\\
  \texttt{[julien.baste,buixuan,antoine.roux]@lip6.fr}}

\begin{abstract}
  A link stream is a sequence of pairs of the form $(t,\{u,v\})$, where $t\in\mathbb N$ represents a time instant and $u\neq v$.
  Given an integer $\gamma$, the $\gamma$-edge between vertices $u$ and $v$, starting at time $t$, is the set of temporally consecutive edges defined by $\{(t',\{u,v\}) \mid t' \in \intv{t,t+\gamma-1}\}$.
  We introduce the notion of temporal matching of a link stream to be an independent $\gamma$-edge set belonging to the link stream.
  We show that the problem of computing a temporal matching of maximum size is NP-hard as soon as $\gamma>1$.
  We depict a kernelization algorithm parameterized by the solution size for the problem.
  As a byproduct we also give a $2$-approximation algorithm.

  Both our $2$-approximation and kernelization algorithms are implemented and confronted to link streams collected from real world graph data.
  We observe that finding temporal matchings is a sensitive question when mining our data from such a perspective as: managing peer-working when any pair of peers $X$ and $Y$ are to collaborate over a period of one month, at an average rate of at least two email exchanges every week.
  We furthermore design a link stream generating process by mimicking the behaviour of a random moving group of particles under natural simulation, and confront our algorithms to these generated instances of link streams.
  All the implementations are open source.
\end{abstract}
\begin{keyword}
graph
\sep
parameterized algorithm
\sep
link stream
\sep
open source code
\end{keyword}

\end{frontmatter}
\section{Introduction}\label{sec_intro}
The problem of finding a maximum matching is a fundamental and well studied question.
It consists in finding a maximum independent edge set of a given graph.
It can be solved in polynomial time by the well known Edmonds algorithm~\cite{E65}.
On the theoretical side, Edmonds result plays a primary role in combinatorial optimization.
This is not only because it made a major historical impact in pointing out the polytope structure of a graph problem,
but also because this result has marked the beginning of a long and fruitful list of matching algorithms all having polynomial worst case time complexity.
We can cite in this sense the connection between \textsc{Matching} and matrix multiplication which was exploited for algorithm design~\cite{CGS15,MS04}.

A lot of research effort has been put in investigating variants of \textsc{Matching} as well.
For instance, a tricky structural analysis helps in devising a linear time procedure for finding popular matchings~\cite{AIKM07}.
Albeit it must be under some conditions called fairness, the fact that a linear time algorithm exists for this kind of popular matching helps in better understanding the underlying discrete structure of matchings.
Surprisingly, while being a well-known and fundamentally polynomial algorithmic problem, \textsc{Matching} has lately attracted research interest in the parameterized areas of algorithmic as well~\cite{FLSPW18,MNN17}.
Here, the overall effort has been put in reducing the polynomial time complexity to linear time, by means of factorising bits of the time complexity to depend on another parameter of the input instance rather than its size.

On the practical side, \textsc{Matching} is a convenient formalism to approach task management problems.
For instance, in a bipartite graph where one vertex set represents chores and the other vertex set represents executors, each having the ability to execute a (different) subset of chores, \textsc{Matching} models the question of maximising the number of chores that can be executed.
This problem has been intensively investigated under the setting of streaming inputs, where unpredictable arrivals of executors must be affected to chores in real time, see e.g.~\cite{M14,WW15}.
In this topic, a careful randomized study~\cite{DKR16} not only provides a streaming algorithm achieving competitive approximation ratios, but it also gives the upper bound of extra information the streaming algorithm requires, and, particularly, a tricky proof of a lower bound of extra information one need to use in order to achieve the previously said approximation ratio.
More generally, in an arbitrary graph representing compatible coworkers, \textsc{Matching} models the concern of maximising the overall workload when work must be done by compatible pairs.
This problem has recently been investigated from a heuristics point of view~\cite{DKPU18}, as well as under a quantitative comparison of the greedy approach on large input~\cite{WL17}.

From the perspective of mining data collected from human activities, however, the input graph should be taken under the light of the time dimension: graph edges are time stamped edges.
They come ordered by the time instants where they are recorded.
We call this kind of data a link stream, in the sense of~\cite{LVM17,VLM16}.
The most natural illustration of such thing is web logs, where any single line of log includes a field under time format.
Phone calls between individuals are also time dependent information, so are email exchanges.
Other kind of time dependent interaction could arise from peer programming management in IT best practices too.
For instance, let us consider a human resource platform where collaborators register for the coming trimester the time intervals where they are unavailable for work.
For simplicity we discretize these time intervals by days, from $1$ to $90$.
Besides, the collaborators also communicate via keywords the hard skills in which they are efficient.
Let us consider that a task must be processed by two skilled collaborators over at least $\gamma=5$ five consecutive days before delivery.
By human limitations, a collaborator will only process a given maximum number of tasks at a time.
Under these conditions, given a potentially infinite number of tasks to process, how many deliveries can be made for the coming trimester?
How continuously can delivery be?
How dense can peer working be versus individualistic task processing?
Fundamentally, \textit{how to quantify the concession in term of continuous delivery in favoring peer programming over individualistic behaviour?}
While not fully answering to these questions, we propose to make one step toward this kind of reflection by formalizing the notion of timed collaboration, and show how to compute it.

A link stream $L$ is a triple $L=(T,V,E)$ where $E$ is a sequence of pairs of the form $(t,\{u,v\})$, with $\{u,v\}\in {V\choose 2}$ being an edge in the sense of classical loopless undirected simple graphs, and $t\in T\subseteq\mathbb N$ an integer representing a discretized time instant.
If every pair $(t,\{u,v\})$ in $L$ satisfies $t=t_0$ for some fixed $t_0$, then we say that link stream $L$ is a graph.
Given an integer $\gamma$, a time instant $t$, and two distinct vertices $u$ and $v$, we define the $\gamma$-edge between $u$ and $v$ starting at time $t$ as the set $\{(t',\{u,v\}) \mid t' \in \intv{t,t+\gamma-1}\}$.
A temporal vertex is a pair $(t,u)$, representing vertex $u\in V$ at time $t\in T$.
We say that a $\gamma$-edge $\Gamma$ contains a temporal vertex $(t,u)$ if there exists a vertex $v \in V$ such that $(t,\{u,v\}) \in \Gamma$.
Two $\gamma$-edges are independent if there is no temporal vertex that is contained in both of them.
Finally, a $\gamma$-matching of link stream $L$ is a set of pairwise independent $\gamma$-edges where each $\gamma$-edge contains exclusively edges from $L$.
We define \textsc{$\gamma$-matching} as the problem of computing a maximum $\gamma$-matching from a given input link stream.
We believe that problems involving $\gamma$-edges with $\gamma=1$ somewhat are rooted in classical graph theory, whereas $\gamma$-edges for $\gamma>1$ intrinsically model temporal interactions.
For instance, when $\gamma=1$, this problem can be solved by a slight extension of previously mentioned Edmonds algorithm~\cite{E65}.
In recent studies under a temporal perspective, the problem when $\gamma=1$ has also been considered along with additional conditions in the computed $\gamma$-matching, making it NP-hard~\cite{BELP18,MS16}.
In this paper, we focus on $\gamma$-edges with a non-trivial duration, that is, when $\gamma>1$.

Unfortunately, \textsc{$\gamma$-matching} turns out to be NP-hard for $\gamma>1$.
We subsequently address the question of pre-processing, in polynomial time, an input instance of \textsc{$\gamma$-matching}, in order to reduce it to an equivalent instance of smaller size, in the sense of kernelization algorithms introduced in~\cite{DF99}.
We show that \textsc{$\gamma$-matching} when parameterized by the solution size admits a quadratic kernel.
On the way to do so, we also point out a simple way to produce a $2$-approximation algorithm for \textsc{$\gamma$-matching}.

We try to comprehend our result from a practical point of view.
From this perspective we design a link stream generating process\footnote{Direct link to the GUI of the generator:\\\url{https://antoinedimitriroux.github.io}} by mimicking the behaviour of a random moving group of particles, using natural simulation: velocity, friction, and random walk.
The generating process helps us in unit testing our implementations on small generated inputs,
as well as in stress testing our implementations on large inputs mimicking natural movements.

Both our $2$-approximation and kernelization algorithms are implemented\footnote{The source code is available at\\\url{https://github.com/antoinedimitriroux/Temporal-Matching-in-Link-Streams}} and confronted, not only to our generated link streams, but also to two particular sets of link streams collected from real world graph data.
These raw datasets are first cleaned by a procedure that we call time-compression, and argue the need for it right below.
In one dataset the link streams have been built by time-compression over exchanges collected from the Enron emailing network~\cite{KY04}.
The other dataset is built by time-compression over a recording of $2\times80$ minute Rollerblade touring in Paris~\cite{TLBCDW09}.

The reason for us to time-compress the raw data is because, therein, the (machine recorded) consecutive time stamps can happen quite instantaneously for human standards.
This leaves no chance for a $\gamma$-edge to exist, as soon as $\gamma>1$.
For instance with the Enron emails, by ISO8601 time stamps are discretized down to the order of seconds.
However, there is absolutely no chance for that individuals $X$ and $Y$ exchange two different emails in any two consecutive seconds in the whole duration of the experiment:
there is simply no time to read the first email and type a reply the following second.
For Enron we usually merge up the time stamps to the order of half a week:
if an email is received, read, thought over, eventually replied within $3.5$ days, then we consider there is collaboration within that period of time.
From this perspective, we compress our raw data by edge contraction over the time dimension, formally as follows.
For any link stream $L=(T,V,E)$ and $1<\delta<|T|$, we define the \textit{$\delta$-compression} $L_\delta=(T_\delta,V_\delta,E_\delta)$ as
$V_\delta=V$, $T_\delta=\llbracket \frac{\min T}{\delta}, \frac{\max T}{\delta} \rrbracket$, and
$$E_\delta=\{(t,\{u,v\})~|~\exists t'\in T: \delta t\leq t'<\delta(t+1) \land (t',\{u,v\})\in E\}.$$

After running our implementation on the two datasets, we make three observations.
First, we believe that solving \textsc{$\gamma$-matching} is not easy as soon as we time-compress the raw data with ``human-understandable'' values of $\delta$ and $\gamma$.
Second, on small values of $\gamma$,
we observe that the kernelization algorithm helps in reducing the input link stream to an equivalent instance of size approximatively $10-20\%$ the size of the original input.
This gives measurable evidence of performance for our preprocessing by kernelization.
Our third observation is very marginal.
Note beforehand from definition that the $2$-approximation algorithm produces a lower bound for \textsc{$\gamma$-matching}, which is at least half the optimal value.
Moreover, note also that the kernelization algorithm gives a naive upper bound for \textsc{$\gamma$-matching} by simply counting the number of $\gamma$-edges present in the kernel.
Our third observation from the numerical analysis is that these upper bound and lower bound nearly meet on some areas in the datasets.
Even though they remain extremely marginal, these cases point out that, sometimes, a kernelization algorithm can also provide a numerical proof of optimality of the result found by a (greedy) $2$-approximation.
This hints at the usefulness beyond theoretical considerations of \textsc{$\gamma$-matching} kernelization.
Moreover, our kernelization runtime is under ten seconds for inputs where the input size is some hundreds thousand and the parameter is some thousands.
Fixed parameter tractable (FPT) paradigm in general, and kernelization in particular, would never be numerically helpful if the complexity analysis hides a big function of the parameter in the Landau notation.
Luckily, we use simple algorithmic processes for our kernelization.

The paper is organised as follows.
We first introduce the notion of temporal matching in Section~\ref{sec:tm}.
In Section~\ref{sec:aa}, we present our algorithmic tools in order to obtain
our main result, the kernelization algorithm: it is presented in Section~\ref{sec:ka}.
In Section~\ref{sec:num}, we present our numerical analysis.
We close the paper with concluding remarks and directions for further research.

\section{Temporal matching}\label{sec:tm}
Unless otherwise stated, graphs in this paper are simple, undirected and loopless graphs.
We denote by $\mathbb{N}$ the set of non negative integer.
Given two integers $p$ and $q$, we denote by $\intv{p,q}$ the set $\{r \in \mathbb{N} \mid p \leq r \leq q\}$.
A \emph{link stream} $L$ is a triple $(T,V,E)$ such that
$T \subseteq \mathbb{N}$ is an interval,
$V$ is a set, and
$E \subseteq T \times {V \choose 2}$.
The link stream can be seen as an extension of graphs.
Indeed, a graph is a link stream where $|T| = 1$.
The elements of $V$ are called \emph{vertices} and the elements of $E$ are called \emph{(timed) edges}.
A \emph{temporal vertex} of $L$ is a pair $(t,u)$ such that $t \in T$ and $u \in V$.

Given an integer $\gamma$, a $\gamma$-edge between two vertices $u$ and $v$ at time $t$, denoted $\Gamma_\gamma(t,u,v)$, is the set $\{(t',\{u,v\}) \mid t' \in \intv{t,t+\gamma-1}\}$.
We say that a $\gamma$-edge $\Gamma$ contains a temporal vertex $(t,u)$ if there exists a vertex $v \in V$ such that $(t,\{u,v\}) \in \Gamma$.
We say that two $\gamma$-edges are independent if there is no temporal vertex that is contained in both of them.
A $\gamma$-matching $\mathcal{M}$ of a link stream $L$ is a set of pairwise independent $\gamma$-edges.
We say that a $\gamma$-edge $\Gamma$ is incident with a vertex $u \in V$ if there exist a vertex $v \in V$ and an integer $t \in T$ such that $\Gamma = \Gamma_\gamma(t,u,v)$.
We say that an edge $e \in E$ is in a $\gamma$-matching $\mathcal{M}$ if there exists $\Gamma \in \mathcal{M}$ such that $e \in \Gamma$.

We focus on the following problem.

\probl
{\textsc{$\gamma$-matching}}
{A link stream $L$ and an integer $k$.}
{A $\gamma$-matching of $L$ of size $k$ or a correct answer that such a set does not exist.}

\begin{theorem}
\textsc{$\gamma$-matching} is NP-hard for $\gamma>1$.
\end{theorem}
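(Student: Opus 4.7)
The plan is to prove NP-hardness by polynomial reduction from a classical NP-hard problem, most naturally \textsc{3-SAT}. Given a 3-CNF formula $\phi$ with $n$ variables and $m$ clauses, I would build a link stream $L_\phi$ and a parameter $k = n+m$ such that $L_\phi$ admits a $\gamma$-matching of size $k$ if and only if $\phi$ is satisfiable.

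The construction rests on two kinds of gadgets. For each variable $x_i$, a \emph{variable gadget} consisting of two $\gamma$-edges $T_i$ and $F_i$ that share at least one common temporal vertex, so that any $\gamma$-matching picks at most one of them; these encode the two possible truth assignments of $x_i$. For each clause $C_j$, a \emph{clause gadget} consisting of one or more $\gamma$-edges, placed in time so as to overlap the variable gadgets of $C_j$'s literals, designed so that a $\gamma$-edge from that gadget can be included in the matching if and only if at least one literal of $C_j$ is made true by the variable-gadget choices. A counting/budget argument then forces any $\gamma$-matching of size $n+m$ to pick exactly one $\gamma$-edge per variable gadget and one per clause gadget, from which one reads off a satisfying assignment; conversely, any satisfying assignment immediately yields a $\gamma$-matching of size $n+m$.

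The main obstacle will be the design of the clause gadget: one must engineer the temporal overlaps precisely so that conflicts encode the disjunction ``at least one literal is true'' without admitting spurious solutions, and the construction must work uniformly for every $\gamma > 1$. A natural workaround is to first settle the case $\gamma=2$ and then handle larger $\gamma$ by a padding argument that stretches each time unit by a factor of $\gamma-1$. It is also essential that the reduction exploit the temporal length of $\gamma$-edges in a non-trivial way, since for $\gamma=1$ the problem collapses to ordinary matching and is solved in polynomial time by Edmonds' algorithm; the crucial property to use is precisely the fact that, once $\gamma>1$, two $\gamma$-edges sharing a vertex at merely overlapping (not necessarily equal) time intervals already conflict, giving a richer clash structure than the classical matching condition.
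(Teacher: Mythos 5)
Your high-level strategy (reduction from \textsc{3-Sat} with variable gadgets, clause gadgets, and a budget argument) is the same as the paper's, but the proposal stops exactly where the difficulty begins, and the two concrete design choices you do commit to would not survive being worked out. First, a variable gadget made of a single conflicting pair $T_i,F_i$ cannot propagate a truth value to arbitrarily many clause occurrences: a $\gamma$-edge has only $2\gamma$ temporal vertices, and each clause $\gamma$-edge that must clash with $F_i$ consumes an overlapping window of $\gamma$ consecutive times at one of its two endpoints, so only $O(1)$ clauses per variable can be wired this way (and you do not restrict to a bounded-occurrence variant of \textsc{3-Sat}). Second, the budget $k=n+m$ with ``at most one per gadget'' is not by itself enough to exclude spurious solutions in which the matching skips a variable gadget and compensates elsewhere; you name this obstacle but offer nothing to overcome it. The paper's construction is precisely the machinery needed to fix both issues: each variable gadget is a chain $x^-,x^=,x^+$ that is \emph{tiled} by $m+1$ consecutive $\gamma$-edges over a time horizon of length exactly $(m+1)\gamma$, with a dedicated occurrence vertex $x^{++}_i$ (resp.\ $x^{--}_i$) per clause slot $\intv{i\gamma+1,(i+1)\gamma}$, and the budget $(2m+1)n+m$ is made tight by pigeonhole claims of the form ``an interval of length $(m+1)\gamma-1$ cannot contain $m+1$ disjoint blocks of length $\gamma$.'' It is this exact-tiling argument, not a local two-edge conflict, that forces consistency of the assignment across all clauses.

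Your fallback of proving the case $\gamma=2$ and then ``stretching each time unit by a factor of $\gamma-1$'' is also not sound as stated: $\gamma$-edges may start at arbitrary, non-aligned times, so after stretching a stretched $2$-edge of length $2(\gamma-1)>\gamma$ contains several candidate $\gamma$-edges with different offsets, and controlling these offsets is again the whole difficulty. The paper avoids any such padding by making the construction uniform in $\gamma$ and exploiting the single arithmetic fact that $\gamma$ does not divide $i\gamma+1$ when $\gamma>1$ (Claim~3), which is exactly where the hypothesis $\gamma>1$ enters. As it stands, your proposal identifies the right target and the right proof shape but does not contain the key idea that makes the reduction work.
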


\begin{proof}
  We prove the NP-completeness of the decision version of \textsc{$\gamma$-matching}
  by a reduction from \textsc{3-Sat}, that is well known to be NP-complete.
  Let $\varphi$ be a formula with $n$ variables $x_1, \ldots, x_n$ and $m$ clauses $C_0, \ldots, C_{m-1}$ such that each clauses is of size at most $3$.
  Without loss of generality, we assume that a clause does not contain twice the same variable.
  We call $X$ the set containing the $n$ variables and $\mathcal{C}$ the set containing the $m$ clauses.

  We define the link stream $L=(T,V,E)$ in the following way:

  \begin{itemize}
  \item $T = \intv{0,(m+1)\gamma  -1}$.
  \item   $V = \{x^{-}, x^{=}, x^{+} \mid x \in X\} \cup \{ x^{++}_t,  x^{--}_t \mid x \in X, t \in \intv{0,m-1}\} \cup \{c\}$
  \item $E = E_{var} \cup E_{cla}$ where:
  \begin{eqnarray*}
    E_{var} &=& \{(t,\{x^{=},x^{+}\}), (t,\{x^{=},x^{-}\}) \mid  t \in \intv{0,(m+1)\gamma  -1}, x \in X\} \\
        &\cup& \{(t,\{x^{+},x^{++}_{i}\}) \mid t \in \intv{i\gamma+1, (i+1)\gamma}, i \in \intv{0,m-1}, x \in X \} \\
        &\cup& \{(t,\{x^{-},x^{--}_{i}\}) \mid t \in \intv{i\gamma+1, (i+1)\gamma}, i \in \intv{0,m-1}, x \in X \} \\
    E_{cla} & = & \{(t,\{c,x^{++}_i\}) \mid t \in \intv{i\gamma+1, (i+1)\gamma}, i \in \intv{0,m-1},  x \in X,\\ &&\qquad x \mbox{ appears positively in } C_{i}\}\\
        & \cup & \{(t,\{c,x^{--}_i\}) \mid  t \in \intv{i\gamma+1, (i+1)\gamma}, i \in \intv{0,m-1}, x \in X,\\ &&\qquad x \mbox{ appears negatively in } C_{i}\}.\\
  \end{eqnarray*}
  \end{itemize}

We depict in Figure~\ref{fig:nphard} the link stream build for $\gamma=3$ and $\varphi=(w \vee \overline{x} \vee y)\wedge (w \vee x \vee \overline{z})$.

We show that there is an assignments of the variables that satisfies $\varphi$ if and only if $L$ contains a $\gamma$-matching of size $(2m+1)n+m$.

\begin{figure}
  \centering
  \begin{tikzpicture}

    \def \dec {-0.5}
    \def \x {0*\dec}
    \node at (-1,0*\dec+\x) {\scriptsize{$w^{--}_1$}};
    \node at (-1,1*\dec+\x) {\scriptsize{$w^{--}_0$}};
    \node at (-1,2*\dec+\x) {\scriptsize{$w^-$}};
    \node at (-1,3*\dec+\x) {\scriptsize{$w^=$}};
    \node at (-1,4*\dec+\x) {\scriptsize{$w^+$}};
    \node at (-1,5*\dec+\x) {\scriptsize{$w^{++}_0$}};
    \node at (-1,6*\dec+\x) {\scriptsize{$w^{++}_1$}};

    \def \x {8*\dec}
    \node at (-1,0*\dec+\x) {\scriptsize{$x^{--}_1$}};
    \node at (-1,1*\dec+\x) {\scriptsize{$x^{--}_0$}};
    \node at (-1,2*\dec+\x) {\scriptsize{$x^-$}};
    \node at (-1,3*\dec+\x) {\scriptsize{$x^=$}};
    \node at (-1,4*\dec+\x) {\scriptsize{$x^+$}};
    \node at (-1,5*\dec+\x) {\scriptsize{$x^{++}_0$}};
    \node at (-1,6*\dec+\x) {\scriptsize{$x^{++}_1$}};

    \def \x {16*\dec}
    \node at (-1,0*\dec+\x) {\scriptsize{$y^{--}_1$}};
    \node at (-1,1*\dec+\x) {\scriptsize{$y^{--}_0$}};
    \node at (-1,2*\dec+\x) {\scriptsize{$y^-$}};
    \node at (-1,3*\dec+\x) {\scriptsize{$y^=$}};
    \node at (-1,4*\dec+\x) {\scriptsize{$y^+$}};
    \node at (-1,5*\dec+\x) {\scriptsize{$y^{++}_0$}};
    \node at (-1,6*\dec+\x) {\scriptsize{$y^{++}_1$}};

    \def \x {24*\dec}
    \node at (-1,0*\dec+\x) {\scriptsize{$z^{--}_1$}};
    \node at (-1,1*\dec+\x) {\scriptsize{$z^{--}_0$}};
    \node at (-1,2*\dec+\x) {\scriptsize{$z^-$}};
    \node at (-1,3*\dec+\x) {\scriptsize{$z^=$}};
    \node at (-1,4*\dec+\x) {\scriptsize{$z^+$}};
    \node at (-1,5*\dec+\x) {\scriptsize{$z^{++}_0$}};
    \node at (-1,6*\dec+\x) {\scriptsize{$z^{++}_1$}};

    \node at (-1,-2*\dec) {\scriptsize{$c$}};

    \draw[dotted] (-0.5,-2*\dec) -- (8.5, -2*\dec);

    \foreach \y in {0,1,2,3}
    {
      \foreach \z in {0,...,6}
      {
        \draw[dotted] (-0.5, \y *8*\dec+\z*\dec) -- (8.5, \y *8*\dec+\z*\dec);
      }

    }
    \foreach \x in {0,...,8}
    {
      \node at (\x,1-2*\dec) {$\x$};
      \node at (\x, -2*\dec) {$\bullet$};

      \draw[dotted] (\x,0.5-2*\dec) -- (\x,30*\dec-0.5);
              
      \foreach \y in {0,1,2,3}
      {
        \node at (\x, \y *8*\dec+3*\dec) {$\bullet$};
        \node at (\x, \y *8*\dec+2*\dec) {$\bullet$};
        \node at (\x, \y *8*\dec+4*\dec) {$\bullet$};
        \draw (\x, \y *8*\dec+4*\dec) -- (\x, \y *8*\dec+2*\dec);
      }
    }
    \foreach \x in {1,...,3}
    {
      \foreach \y in {0,1,2,3}
      {
        \node at (\x, \y *8*\dec+5*\dec) {$\bullet$};
        \node at (\x, \y *8*\dec+1*\dec) {$\bullet$};
        \draw (\x, \y *8*\dec+5*\dec) -- (\x, \y *8*\dec+1*\dec);
        
      }
    }
    \foreach \x in {4,5,6}
    {
      \foreach \y in {0,1,2,3}
      {
        \node at (\x, \y *8*\dec+6*\dec) {$\bullet$};
        \node at (\x, \y *8*\dec+0*\dec) {$\bullet$};
        \draw (\x, \y *8*\dec+6*\dec) -- (\x, \y *8*\dec+0*\dec);
        
      }
    }

    \foreach \x in {1,2,3}
    {
      \node[draw, circle] at (\x,0*8*\dec+5*\dec) {}; 
      \node[draw, circle] at (\x,1*8*\dec+1*\dec) {}; 
      \node[draw, circle] at (\x,2*8*\dec+5*\dec) {}; 
    }

        \foreach \x in {4,5,6}
    {
      \node[draw, circle] at (\x,0*8*\dec+6*\dec) {}; 
      \node[draw, circle] at (\x,1*8*\dec+6*\dec) {}; 
      \node[draw, circle] at (\x,3*8*\dec+0*\dec) {}; 
    }

\end{tikzpicture}

\caption{The constructed linkstream $L$ when $\varphi=(w \vee \overline{x} \vee y)\wedge (w \vee x \vee \overline{z})$ and $\gamma = 3$.
  Here $T = \intv{0,8}$ and the edge $(t,\{u,v\})$ of $L$ are depicted by an edge following the vertical line corresponding to time $t$ going from the horizontal line corresponding to $u$ to the horizontal line corresponding to $v$.
For readability, the edges incident with $c$ are not drown. Instead, we have circled the vertices that are neighbors of $c$ at each specific time.
}
\label{fig:nphard}
\end{figure}
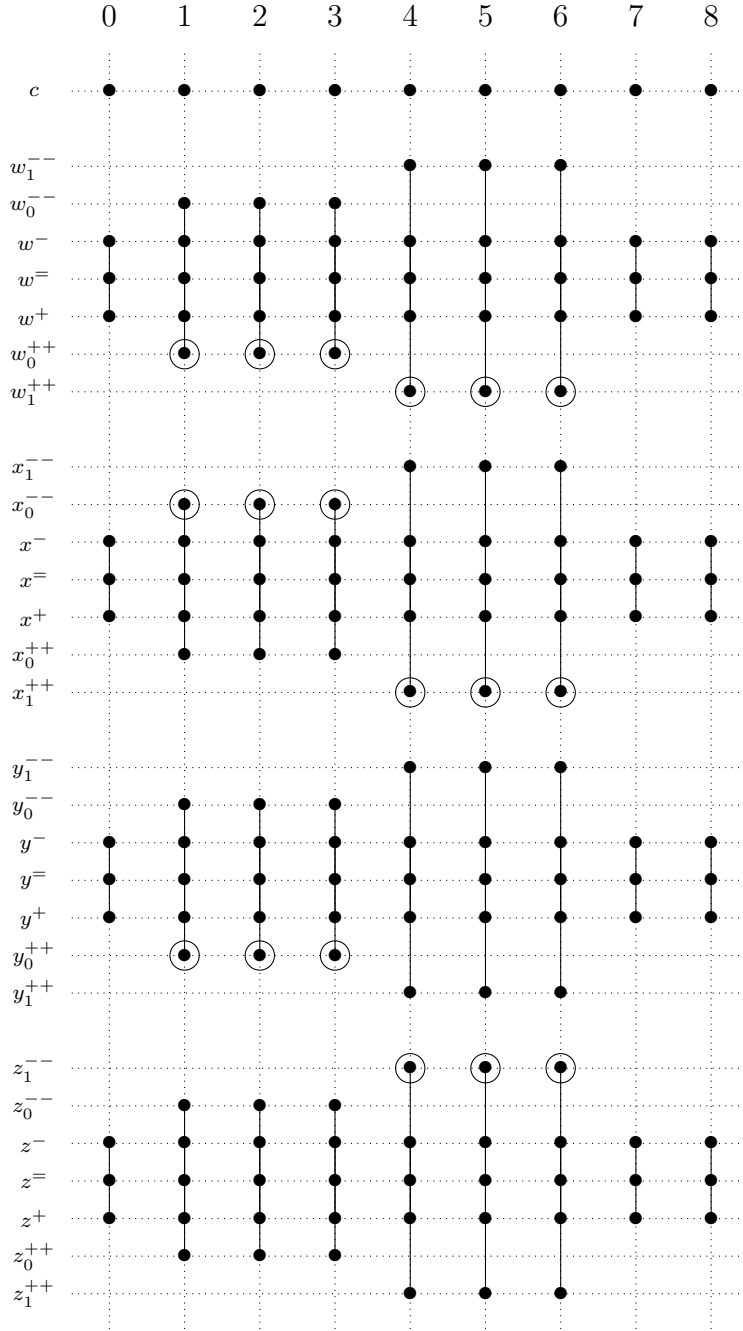
  
  Intuitively, the edge between  $(0,\{x^{=},x^{+}\})$ and   $(0,\{x^{=},x^{-}\})$ that is in the requested $\gamma$-matching 
  determines if the variable $x$ is set to true or false.
  Moreover, the size of the requested $\gamma$-matching will ensure that if
  the edge $(0,\{x^{=},x^{+}\})$ (resp. $(0,\{x^{=},x^{-}\})$) is in the $\gamma$-matching,
  then every edge $(t,\{x^{=},x^{+}\})$  (resp. $(t,\{x^{=},x^{-}\})$), $t \in \intv{0,(m+1)\gamma  -1}$
  and every edge $(t,\{x^{-},x^{--}_i\})$ (resp. $(t,\{x^{+},x^{++}_i\})$), $t \in \intv{1,m\gamma}\}$, $i = \left\lfloor \frac{t-1}{\gamma} \right\rfloor$,
  are in the $\gamma$-matching as well.
  Finally, during the time interval $\intv{i\gamma+1,(i+1) \gamma}$, we will certify that the clause $C_i$ is satisfied.

  First assume that $\varphi$ is satisfiable.
  Let $\psi$ be a satisfying assignment of $\varphi$ and let $\chi: \mathcal{C} \rightarrow V$ be a function that, for each clause $C_i$, $i \in \intv{0,m-1}$, arbitrary chooses a variable $x \in X$, such that the assignment of $x$ given by $\psi$ satisfies $C_i$, and returns $x^{++}_i$ (resp. $x^{--}_i$) if $\psi(x) = \mathtt{true}$ (resp. $\psi(x) = \mathtt{false}$).
  Let
  \begin{eqnarray*}
    \mathcal{M} &=& \{\Gamma_\gamma(i\cdot\gamma,x^=,x^+) \mid x \in X, \psi(x) = \mathtt{true}, i \in \intv{0,m}\}\\ 
    &\cup&  \{\Gamma_\gamma(i\cdot\gamma,x^=,x^-) \mid x \in X, \psi(x) = \mathtt{false}, i \in \intv{0,m}\}\\
    &\cup&  \{\Gamma_\gamma(i\cdot\gamma+1, x^{-},x^{--}_i) \mid x \in X, \psi(x) = \mathtt{true}, i \in \intv{1,m}\}\\
    &\cup&  \{\Gamma_\gamma(i\cdot\gamma+1, x^{+},x^{++}_i) \mid x \in X, \psi(x) = \mathtt{false}, i \in \intv{1,m}\}\\
    &\cup&  \{\Gamma_\gamma(i\cdot\gamma+1, c,\chi(C_t)) \mid i \in \intv{1,m}\}.\\
  \end{eqnarray*}
  One can  verify that $\mathcal{M}$ is a $\gamma$-matching of $L$ of size $(2m+1)n+m$.

Assume now that $L$ contains a $\gamma$-matching $\mathcal{M}$ of size $(2m+1)n+m$.
We use several claim in order to construct a satisfying assignment of $\varphi$.
\begin{claim}
  \label{claim1}
    For each $x \in X$,  $\mathcal{M}$ contains at most $m+1$ $\gamma$-edges incident with $x^{+}$ (resp. $x^-$).
  \end{claim}
  \begin{proof}
    This result follows from the fact that $T = \intv{0,(m+1)\gamma  -1}$ is of size $(m+1)\gamma$ and so, cannot be divided into $m+2$ pairwise disjoint sets of size $\gamma$.
  \end{proof}
  
  \begin{claim}
      \label{claim2}
      Given $x \in X$, if $\Gamma_\gamma(0,x^=,x^+) \not \in \mathcal{M}$ (resp. $\Gamma_\gamma(0,x^=,x^-) \not \in \mathcal{M}$), then $\mathcal{M}$ contains at most $m$ $\gamma$-edges
      incident with $x^{+}$ (resp. $x^{-}$).
\end{claim}
\begin{proof}
  As $\Gamma_\gamma(0,x^=,x^+)$ is the only $\gamma$-edge of $L$ that contains the edge $e_x^0=(0,\{x^=,x^+\})$, this implies that the edge $e_x^0$ is not contained in any $\gamma$-edge of $\mathcal{M}$.
  So the $\gamma$-edge of $\mathcal{M}$ that are incident with $x^+$ are constraint to exist in the time interval $I=\intv{1,(m+1)\gamma  -1}$ that is of size $(m+1)\gamma-1$.
  Thus, $I$ cannot be divided in $m+1$ pairwise disjoint sets of size $\gamma$. 
  The claim follows.
\end{proof}

\begin{claim}
  \label{claim4}
  $\mathcal{M}$ contains exactly $m$ $\gamma$-edges incident with $c$
  and contains exactly $2m+1$ $\gamma$-edges incident with $x^+$ or $x^-$, for each $x \in X$.
\end{claim}
\begin{proof}
  As $\mathcal{M}$ is a $\gamma$-matching, then for each $x \in X$,   $\Gamma_\gamma(0,x^=,x^+) \not \in \mathcal{M}$ or $\Gamma_\gamma(0,x^=,x^-) \not \in \mathcal{M}$. So Claim~\ref{claim1} and Claim~\ref{claim2} implie that for each $x \in X$,  $\mathcal{M}$ contains at most $2m+1$ $\gamma$-edges incident with $x^+$ or $x^-$.
  Moreover, by construction $\mathcal{M}$ can contains at most $m$ $\gamma$-edges incident with $c$ and $L$ does not contains any edge of the form
  $(t,\{x^+,y^-\})$,   $(t,\{x^+,y^+\})$,   $(t,\{x^-,y^-\})$,   $(t,\{x^+,c\})$, or $(t,\{x^-,c\})$, for any $x,y \in X$.
  Thus the budget is tight.
  The claim follows.
\end{proof}

Note that by construction, if $\mathcal{M}$ contains a $\gamma$-edge incident with $x^{++}_i$ for some $x \in X$ and $i \in \intv{0,m-1}$, then this $\gamma$-edge has to be either $\Gamma_\gamma(i\gamma+1,c,x^{++}_i)$ or $\Gamma_\gamma(i\gamma+1,x^+,x^{++}_i)$.
Moreover Claim~\ref{claim3} give us some information in the case where $\Gamma_\gamma(i\gamma+1,x^+,x^{++}_i) \in \mathcal{M}$
  \begin{claim}
      \label{claim3}
    Given $x \in X$, if $\mathcal{M}$ contains a $\gamma$-edge $\Gamma_\gamma(i\gamma+1, x^+,x^{++}_i)$ (resp. $\Gamma_\gamma(i\gamma+1, x^-,x^{--}_i)$) for some $i \in \intv{0,m-1}$, then  $\mathcal{M}$ contains at most $m$ $\gamma$-edges incident with $x^+$ (resp. $x^-$).
  \end{claim}
  \begin{proof}
    Let $x \in X$ and let $i$ be the first value such that $\Gamma_\gamma(i\gamma+1,x^+,x^{++}_i) \in \mathcal{M}$.
    As $\gamma$ does not divide $i\gamma+1$, this implies that, in the interval $\intv{0,i\gamma}$, at least one edge $e_x^t=(t,\{x^=,x^+\})$, $t \in \intv{0,i\gamma}$ is not in $\mathcal{M}$.
    So the $\gamma$-edges of $\mathcal{M}$ that are incident with $x^+$ are constraint to exist in the time interval $I=T\setminus{t}$ that is of size $(m+1)\gamma-1$.
      Thus, $I$ cannot be divided in $m+1$ pairwise disjoint sets of size $\gamma$. 
  The claim follows.
\end{proof}

Let $x \in X$.
Using Claim~\ref{claim4}, we know that 
$\mathcal{M}$ contains exactly $2m+1$ $\gamma$-edges incident with $x^+$ or $x^-$.
By the pigeonhole principle, we know that for $x^+$ or $x^-$, say $x^+$, $\mathcal{M}$ contains exactly $m+1$ $\gamma$-edges incident with $x^+$.
By Claim~\ref{claim3}, this implies that $\{\Gamma_\gamma(i\gamma,x^=,x^+) \mid i \in \intv{0,m}\} \subseteq \mathcal{M}$.
Thus, as $\mathcal{M}$ is a $\gamma$-matching that contains $m$ $\gamma$-edges incident with $x^-$, this also implies that $\{\Gamma_\gamma(i\gamma+1,x^-,x^{--}_i) \mid i \in \intv{0,m-1}\} \subseteq \mathcal{M}$.

For each variable $x \in X$, we set $x$ to \texttt{true} (resp. \texttt{false}) if $\Gamma_\gamma(0,x^=,x^+) \in \mathcal{M}$ (resp. $\Gamma_\gamma(0,x^=,x^-) \in \mathcal{M}$).
Let $\varphi$ be the so obtained assignment.
Let $i \in \intv{0,m-1}$.
We know that there exists $x \in X$ such that either $\Gamma_\gamma(i\gamma+1,c,x^{++}_i) \in \mathcal{M}$ or  $\Gamma_\gamma(i\gamma+1,c,x^{--}_i) \in \mathcal{M}$.
Let fix this $x \in X$ and assume that $\Gamma_\gamma(i\gamma+1,c,x^{++}_i) \in \mathcal{M}$, meaning that $x$ appears positively in $C_i$.
This implies that $\Gamma_\gamma(i\gamma+1,x^+,x^{++})\not \in \mathcal{M}$, so that  $\{\Gamma_\gamma(i'\gamma,x^=,x^+) \mid i' \in \intv{0,m}\} \subseteq \mathcal{M}$.
Thus, $x$ is set to \texttt{true} by $\varphi$ and $x$ satisfies $C_i$.
This concludes the proof.
\end{proof}

\section{Approximation algorithm}\label{sec:aa}
In classical graph theory, it is folklore that any maximal matching is also a $2$-approximation of a maximum matching, see e.g.~\cite[Exercice 35.4]{cormen}.
Fortunately enough, it is roughly the same situation with link streams.
Precisely, in this section, we adopt the greedy approach --finding a maximal $\gamma$-matching-- in order to provide a $2-$approximation algorithm for \textsc{$\gamma$-matching}.

Let $L = (T,V,E)$ be a link stream.
Let $\mathcal{P}$ be the set of all $\gamma$-edges of $L$.
Note that these $\gamma$-edges are not independent from each other, on the contrary, they highly overlap.
Let  $\preceq$ be an arbitrary  total ordering on the elements of $\mathcal{P}$ such that
given for any two elements of $\mathcal{P}$, $\Gamma_1 = \Gamma_\gamma(t_1,u_1,v_1)$ and  $\Gamma_2 = \Gamma_\gamma(t_2,u_2,v_2)$ such that $t_1 < t_2$, we have $\Gamma_1 \preceq \Gamma_2$.

We denote by $\mathcal{A}$ the following greedy algorithm. 
The algorithm starts with $\mathcal{M} = \emptyset$, $\mathcal{Q} = \mathcal{P}$, and a function $\rho: V\times T \rightarrow \{0,1\}$ such that for each $(t,v) \in T \times V$, $\rho(t,v) = 0$.
The purpose of $\rho$ is to keep track of the temporal vertices that are contained in a $\gamma$-edge of $\mathcal{M}$.
As long as $\mathcal{Q}$ is not empty, the algorithm selects $\Gamma$, the $\gamma$-edge of $\mathcal{Q}$ that is minimum for $\preceq$, and removes it from $\mathcal{Q}$.
Let $K$ be the set of the $2\gamma$ temporal vertices that are contained in $\Gamma$.
If, for each $(t,v) \in K$, $\rho(t,v) = 0$, then the algorithm adds $\Gamma$ to $\mathcal{M}$, otherwise it does nothing at this step.
For each $(t,v) \in K$, it sets $\rho(t,v)$ to $1$ and repeats.
If  $\mathcal{Q} = \emptyset$, it returns $\mathcal{M}$.

As $\mathcal{P}$ can be determined in a sorted way in time $\mathcal{O}(m)$, this algorithm runs in time $\mathcal{O}(n\tau+m)$, where $\tau = |T|$, $n = |V|$, $m =|E|$, and where $\gamma$ is a constant hidden in the $\mathcal{O}$.

Given a $\gamma$-matching $\mathcal{M}$, we define the \emph{bottom temporal vertices} of $\mathcal{M}$, denoted by $\mathtt{bot}(\mathcal{M})$, as the set 
$\{(t+\gamma-1,u), (t+\gamma-1,v) \mid \Gamma_\gamma(t,u,v) \in \mathcal{M}\}$. Lemma~\ref{th:bot} shows the crucial role of the bottom temporal vertices of the matchings returned by $\mathcal{A}$.

\begin{lemma}
  \label{th:bot}
  Let $\gamma$ be a positive integer, let $L$ be a link stream, and let $\mathcal{M}$ be a $\gamma$-matching returned by $\mathcal{A}$ when applied to $L$. If $\mathcal{M}'$ is a $\gamma$-matching of $L$, then every $\gamma$-edge of $\mathcal{M'}$ contains, at least, one temporal vertex of $\mathtt{bot}(\mathcal{M})$.
\end{lemma}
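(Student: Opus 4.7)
The plan is to proceed by a direct case analysis on whether the $\gamma$-edge $\Gamma' = \Gamma_\gamma(t',u',v')$ of $\mathcal{M}'$ is itself picked up by the greedy algorithm $\mathcal{A}$, exploiting the fact that $\preceq$ processes $\gamma$-edges in non-decreasing order of starting time.

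\textbf{Case 1:} $\Gamma' \in \mathcal{M}$. Then by definition the bottom temporal vertex $(t'+\gamma-1,u')$ belongs to $\mathtt{bot}(\mathcal{M})$ and is trivially contained in $\Gamma'$, so there is nothing to prove.

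\textbf{Case 2:} $\Gamma' \notin \mathcal{M}$. Since $\Gamma' \in \mathcal{P}$, the algorithm $\mathcal{A}$ must have considered $\Gamma'$ at some step and rejected it. By the update rule of $\rho$, this means there exists a temporal vertex $(t^\star,w)$ contained in $\Gamma'$ (so $w \in \{u',v'\}$ and $t^\star \in \intv{t',t'+\gamma-1}$) and an earlier selected $\Gamma = \Gamma_\gamma(s,x,y) \in \mathcal{M}$ with $\Gamma \prec \Gamma'$ that also contains $(t^\star,w)$. In particular $w \in \{x,y\}$ and $t^\star \in \intv{s,s+\gamma-1}$. The ordering $\preceq$ ensures $s \leq t'$.

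The key step is then to show that the bottom temporal vertex $(s+\gamma-1, w)$ of $\Gamma$ — which belongs to $\mathtt{bot}(\mathcal{M})$ — is itself a temporal vertex of $\Gamma'$. Since $w \in \{u',v'\}$, only the time coordinate must be checked. On one hand, $s \leq t'$ yields $s+\gamma-1 \leq t'+\gamma-1$; on the other, the common time $t^\star$ satisfies $t' \leq t^\star \leq s+\gamma-1$, so $s+\gamma-1 \geq t'$. Hence $s+\gamma-1 \in \intv{t',t'+\gamma-1}$, which shows that $(s+\gamma-1,w)$ is contained in $\Gamma'$ and completes the argument.

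The only mild obstacle is being careful with the temporal-vertex bookkeeping in Case 2: one must keep track of the fact that the shared vertex $w$ between $\Gamma$ and $\Gamma'$ is precisely what allows the bottom temporal vertex of $\Gamma$ to land inside $\Gamma'$. Once this is identified, the time-window inequalities follow immediately from $s \leq t'$ and the non-emptiness of the overlap $\intv{s,s+\gamma-1} \cap \intv{t',t'+\gamma-1}$.
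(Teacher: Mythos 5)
Your proof is correct and follows essentially the same route as the paper's: both identify a $\gamma$-edge of $\mathcal{M}$ that blocks $\Gamma'$, use the start-time-respecting order $\preceq$ to conclude it starts no later than $\Gamma'$, and then use the overlap at the shared vertex to place its bottom temporal vertex inside $\Gamma'$'s time window. Your version merely makes explicit, via the $\rho$ bookkeeping, the step the paper phrases as ``not possible by construction of $\mathcal{A}$''.
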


\begin{proof}
  First, note that any $\gamma$-edge of $\mathcal{M}$ contains two temporal vertices of $\mathtt{bot}(\mathcal{M})$, and so, at least one.
  Let $\Gamma'$ be a $\gamma$-edge of $\mathcal{M'}$ that is not in $\mathcal{M}$.
  Let $\mathcal{M}^* \subseteq \mathcal{M}$ be the set of every $\gamma$-edge $\Gamma^*$ of $\mathcal{M}$ such that there exists a temporal vertex $(t,u)$ that is contained in both $\Gamma'$ and $\Gamma^*$.
  Assume that $\Gamma' = \Gamma_\gamma(t,u,v)$.
  If there exists $\Gamma^* \in \mathcal{M}$ such that $\Gamma^* = \Gamma_\gamma(t',u,v')$ and $t' \leq t$, then we have that $(t'+\gamma-1,u) \in \mathtt{bot}(\mathcal{M})$ is contained in $\Gamma'$.
  Otherwise, we have that for each  $\Gamma^* \in \mathcal{M}^*$ such that $\Gamma^* = \Gamma_\gamma(t',u,v')$, $t' > t$.
  This is not possible by construction of $\mathcal{A}$.
  This concludes the proof.
\end{proof}

Lemma~\ref{th:bot} plays a cornerstone role in the proof of subsequent Theorem~\ref{maintheorem}.
As a byproduct, we also obtain the following result.
\begin{corollary}
  \label{thm:approx}
$\mathcal{A}$ is a $2$-approximation of the \textsc{$\gamma$-matching} problem.
\end{corollary}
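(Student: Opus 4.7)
The plan is to combine Lemma~\ref{th:bot} with a simple double-counting argument. Let $\mathcal{M}$ be the $\gamma$-matching returned by $\mathcal{A}$ on input $L$, and let $\mathcal{M}^*$ be an arbitrary (in particular, maximum) $\gamma$-matching of $L$. I want to bound $|\mathcal{M}^*|$ above by $2|\mathcal{M}|$.

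First I would observe that $|\mathtt{bot}(\mathcal{M})| = 2|\mathcal{M}|$: every $\gamma$-edge $\Gamma_\gamma(t,u,v) \in \mathcal{M}$ contributes the two temporal vertices $(t+\gamma-1,u)$ and $(t+\gamma-1,v)$ to $\mathtt{bot}(\mathcal{M})$, these two are distinct because $u \neq v$, and no collision occurs between different $\gamma$-edges of $\mathcal{M}$ since the elements of $\mathcal{M}$ are pairwise independent (so they share no temporal vertex, in particular none of their bottom temporal vertices coincide).

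Next I would invoke Lemma~\ref{th:bot}: every $\gamma$-edge of $\mathcal{M}^*$ contains at least one temporal vertex of $\mathtt{bot}(\mathcal{M})$. This allows me to define a map $\phi : \mathcal{M}^* \to \mathtt{bot}(\mathcal{M})$ assigning to each $\Gamma' \in \mathcal{M}^*$ one such bottom temporal vertex. Because $\mathcal{M}^*$ itself is a $\gamma$-matching, its $\gamma$-edges are pairwise independent, so no temporal vertex is contained in two distinct elements of $\mathcal{M}^*$; hence $\phi$ is injective.

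Combining, $|\mathcal{M}^*| \leq |\mathtt{bot}(\mathcal{M})| = 2|\mathcal{M}|$, so $|\mathcal{M}| \geq |\mathcal{M}^*|/2$, which is exactly the $2$-approximation guarantee. The only conceptual step is the injectivity of $\phi$, but this is immediate from the definition of a $\gamma$-matching; all the real work has already been done in Lemma~\ref{th:bot}. No obstacle is expected, the corollary is essentially a counting restatement of that lemma.
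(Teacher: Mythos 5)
Your argument is correct and is essentially identical to the paper's own proof: both combine Lemma~\ref{th:bot} with the count $|\mathtt{bot}(\mathcal{M})| = 2|\mathcal{M}|$ and the pairwise independence of the $\gamma$-edges of $\mathcal{M}^*$ to get the injection into $\mathtt{bot}(\mathcal{M})$. Your write-up merely makes the injective map $\phi$ explicit, which the paper leaves implicit.
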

\begin{proof}
  Let $L$ be the input link stream.
  Let $\mathcal{M}$ be a solution returned by the algorithm $\mathcal{A}$ when applied to $L$, and let $\mathcal{M}'$ be a $\gamma$-matching of $L$.
  As
  $|\mathtt{bot}(\mathcal{M})| = 2|\mathcal{M}|$,
  two $\gamma$-edges of $\mathcal{M}'$ cannot contains the same temporal vertex, and,
  by Lemma~\ref{th:bot}, every $\gamma$-edge of $\mathcal{M}'$ contains at least one element of $\mathtt{bot}(\mathcal{M})$, 
we obtained that $|\mathcal{M}'| \leq 2|\mathcal{M}|$.
\end{proof}

\section{Kernelization algorithm}\label{sec:ka}
Problem \textsc{$\gamma$-matching} has a \textit{kernel} if there exist a computable function $f:\mathbb{N}\rightarrow\mathbb{N}$ and a polynomial time algorithm $\ca$ which takes as input an instance $(L,k)$ of \textsc{$\gamma$-matching} and produces an instance $(L',k')$ such that: $k'\leq k$; $|L'|\leq f(k)$; and $(L',k')$ yields a positive answer for \textsc{$\gamma$-matching} if and only if $(L,k)$ yields a positive answer for \textsc{$\gamma$-matching}.
In this case, algorithm $\ca$ is called a \textit{kernelization algorithm} for \textsc{$\gamma$-matching}~\cite{DF99}.

We now show a kernelization algorithm for \textsc{$\gamma$-matching} by a direct pruning process based on Lemma~\ref{th:bot}.
For convenience, let us say that a $\gamma$-edge $\Gamma$ is incident to a temporal vertex $(t,u)$ when there exists vertex $v\neq u$ such that $(t,\{u,v\})\in\Gamma$.
The main idea is as follows.
First, we compute the set $S$ of all bottom temporal vertices of a $\gamma$-matching produced by previously defined algorithm $\mathcal A$.
Then, we prune the original instance by only keeping edges that belong to a $\gamma$-edge incident to a temporal vertex of $S$.
More precisely, we prove the following result.
\begin{theorem}\label{maintheorem}
  There exists a polynomial-time algorithm that for each instance $(L,k)$, either returns a true instance which correctly correspond to the fact that $L$ contains a $\gamma$-matching of size $k$, or returns an equivalence instance $(L',k)$ such that the number of edges of $L'$ is $2(k-1)(2k-1)\gamma^2$.
\end{theorem}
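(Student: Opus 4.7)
The plan is to implement the pruning sketched right before the theorem, but strengthen it by a second combinatorial trimming that gets rid of the dependence on $|V|$.

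First I would run the algorithm $\mathcal{A}$ on $L$ and let $\mathcal{M}$ be its output. If $|\mathcal{M}| \geq k$, then any $k$ $\gamma$-edges of $\mathcal{M}$ form a $\gamma$-matching of $L$ of size $k$, so return the trivial "true" instance. Otherwise $|\mathcal{M}| \leq k-1$, so $S := \mathtt{bot}(\mathcal{M})$ satisfies $|S| \leq 2(k-1)$. Now I would build $L'$ as follows: for each $(t_0, u_0) \in S$ and each starting time $t' \in \intv{t_0-\gamma+1,\, t_0} \cap T$, let $V_{t_0,u_0,t'}$ be the set of vertices $v$ such that $\Gamma_\gamma(t', u_0, v)$ is a $\gamma$-edge of $L$; if $|V_{t_0,u_0,t'}| > 2k-1$, arbitrarily select a subset $V'_{t_0,u_0,t'}$ of size exactly $2k-1$, otherwise keep all of $V_{t_0,u_0,t'}$. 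Define $E'$ as the union over all triples $(t_0,u_0,t')$ of the edges appearing in the $\gamma$-edges $\Gamma_\gamma(t', u_0, v)$ for $v \in V'_{t_0,u_0,t'}$, and set $L' = (T, V, E')$.

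The size bound would follow by direct counting: there are at most $|S| \leq 2(k-1)$ choices of $(t_0,u_0)$, at most $\gamma$ values of $t'$ per such choice, at most $2k-1$ kept neighbours per triple, and each kept $\gamma$-edge contributes $\gamma$ edges, giving $|E'| \leq 2(k-1)\cdot\gamma\cdot(2k-1)\cdot\gamma = 2(k-1)(2k-1)\gamma^2$. Running $\mathcal{A}$ and carrying out the pruning are both polynomial, so the overall algorithm is polynomial.

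The heart of the proof, and the main obstacle, is correctness: showing $L$ has a $\gamma$-matching of size $k$ iff $L'$ does. The easy direction is $L' \subseteq L$. For the forward direction, given any $\gamma$-matching $\mathcal{M}'$ of $L$ of size $k$, I would iteratively swap out any $\Gamma = \Gamma_\gamma(t'', u, v) \in \mathcal{M}'$ that is not in $L'$. By Lemma~\ref{th:bot}, $\Gamma$ contains some $(t_0,u_0) \in S$, so WLOG $u_0 = u$ and $t'' \in \intv{t_0-\gamma+1, t_0}$; since $\Gamma \notin L'$, it must be that the kept set $V'_{t_0,u_0,t''}$ has $2k-1$ elements and $v \notin V'_{t_0,u_0,t''}$. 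The other $k-1$ $\gamma$-edges of $\mathcal{M}'$ use at most $2(k-1) = 2k-2$ distinct spatial vertices in total, so by the pigeonhole principle at least one $v_j \in V'_{t_0,u_0,t''}$ is not a spatial vertex of any other $\gamma$-edge of $\mathcal{M}'$. Replacing $\Gamma$ by $\Gamma_\gamma(t'',u_0,v_j) \in L'$ keeps $u_0$ occupied over the same time window (so no conflict is created at $u_0$) and uses $v_j$ only there (so no conflict is created at $v_j$), yielding a $\gamma$-matching of $L$ of size $k$ with one fewer $\gamma$-edge outside $L'$. Iterating finitely many times produces a $\gamma$-matching fully contained in $L'$.

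The subtle point requiring care is that each swap changes the "rest" of the matching, so I would make explicit the invariant that at every stage the current set is a valid $\gamma$-matching of $L$ of size $k$, whose complement (relative to the $\gamma$-edge being swapped) still uses at most $2k-2$ distinct spatial vertices; the pigeonhole bound $2k-1$ was calibrated precisely so that this invariant keeps the swap argument alive through the iteration.
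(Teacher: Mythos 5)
Your proposal is correct and follows essentially the same route as the paper's proof: run $\mathcal{A}$, restrict attention to $\gamma$-edges starting within $\gamma$ of a bottom temporal vertex of its output, keep at most $2k-1$ of them per (temporal vertex, start time) pair, and justify correctness by the same pigeonhole swap on the at most $2k-2$ vertices touched by the remaining $k-1$ $\gamma$-edges (you phrase this as an iteration where the paper picks a matching minimizing the number of discarded $\gamma$-edges, which is equivalent). The only cosmetic difference is that you omit the paper's early rejection when $|\mathcal{M}|<k/2$, which is an optimization rather than a necessity.
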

\begin{proof}
  Let $L = (T,V,E)$ be a link stream and $k$ be an integer.
  We first run the algorithm $\mathcal{A}$ on $L$.
  Let $\mathcal{M}$ be the $\gamma$-matching outputed by the algorithm and let $\ell = |\mathcal{M}|$.
  If $\ell\geq k$, then we already have a solution and then return a true instance.
  If $\ell< \frac{k}{2}$, then, by Corollary~\ref{thm:approx}, we know that the instance does not contains a $\gamma$-matching of size $k$, and then we return a false instance.
  We now assume that $\frac{k}{2} \leq \ell < k$.

  Lemma~\ref{th:bot} justifies that we are now focusing on the temporal vertices of $\mathtt{bot}(\mathcal{M})$ in order to find the requested kernel.
  We construct a set  $\mathcal{P}$ of $\gamma$-edges
  and we show that any edge $e$, that is not in a $\gamma$-edge of $\mathcal{P}$, is useless when looking for a $\gamma$-matching of size $k$.
  For each $(t,u) \in \mathtt{bot}(\mathcal{M})$, and for each $t'$ such that $\max(0,t-\gamma+1) \leq t' \leq t$, we consider the set $\mathcal{S}(t',u)$ of every $\gamma$-edge, existing in $L$, with the form $\Gamma_\gamma(t',u,v)$ with $v \in V$.
  If the set  $\mathcal{S}(t',u)$ is of size at most $2k-1$, we add every element of  $\mathcal{S}(t',u)$ to $\mathcal{P}$.
  Otherwise, we select $2k-1$ elements of $\mathcal{S}(t',u)$ that we add to $\mathcal{P}$.
  In both cases, we denote by $\mathcal{S'}(t',u)$ the set of elements of  $\mathcal{S}(t',u)$ that we have added to $\mathcal{P}$.
  This finish the construction of $\mathcal{P}$.
  As   $|\mathtt{bot}(\mathcal{M})| = 2\ell$ and for each element of $\mathtt{bot}(\mathcal{M})$ we have added at most $(2k-1)\gamma$ $\gamma$-edges to $\mathcal{P}$, we have that $|\mathcal{P}| \leq 2\ell (2k-1)\gamma \leq 2 (k-1)(2k-1)\gamma$.

  We now prove that if $L$ contains a $\gamma$-matching $\mathcal{M'}$ of size $k$, then it also contains a $\gamma$-matching $\mathcal{M''}$ of size $k$ such that $\mathcal{M''} \subseteq \mathcal{P}$.
  Let $\mathcal{M'}$ be a $\gamma$-matching of $L$ of size $k$ such that $p = |\mathcal{M'} \setminus \mathcal{P}|$ is minimum.
  We have to prove that $p=0$.
  Assume that $p \geq 1$. Let $\Gamma$ be a $\gamma$-edge in $\mathcal{M'} \setminus \mathcal{P}$.
  Let $(t,u)$ be a temporal vertex of $\mathtt{bot}(\mathcal{M})$ that is contained in $\Gamma$.
  We know by Lemma~\ref{th:bot} that this temporal vertex exists.
  Assume that $\Gamma = \Gamma_\gamma(t',u,v)$ for some $v \in V$ and some $t'$ such that $\max(0,t-\gamma+1) \leq t' \leq t$.
  As $\Gamma \not \in \mathcal{P}$, we have that $\Gamma \in \mathcal{S}(t',u) \setminus \mathcal{S'}(t',u)$, and so $|\mathcal{S'}(t',u)| = 2k-1$.
  Let $N_{\mathcal{S'}}(t',u)$ be the set of vertices $w$ of $V \setminus \{u\}$ such that a $\gamma$-edge of $\mathcal{S'}(t',u)$ is incident to $w$.
  As $\mathcal{M'} \setminus \{\Gamma\}$ is of size $k-1$, the $\gamma$-edges that it contains can be incident to at most $2k-2$ vertices. This means that there exists $w \in N_{\mathcal{S'}}(t',u)$ such that no $\gamma$-edge of $\mathcal{M'} \setminus \{\Gamma\}$ is incident to $w$.
  Thus  $(\mathcal{M'} \setminus \{\Gamma\}) \cup \{\Gamma_\gamma(t',u,w)\}$ is a $\gamma$-matching of size $k$.
  As $\Gamma \not \in \mathcal{P}$ and $\Gamma_\gamma(t',u,v) \in \mathcal{P}$, this contradicts the fact that $p$ is minimum.

  We now can define the link stream $L' = (T,V,E')$ such that $E' = \{e \in E \mid \exists \Gamma \in \mathcal{P} : e \in \Gamma\}$.
  As  $|\mathcal{P}| \leq 2 (k-1)(2k-1)\gamma$ and every element of $\mathcal{P}$ is a $\gamma$-edge, we have that
  $|E'| \leq 2(k-1)(2k-1)\gamma^2$.
  The theorem follows.
\end{proof}

\section{Experimental result}\label{sec:num}
For easy diffusion, both our $2$-approximation and kernelization algorithms are implemented in Java and JavaScript\footnote{The source code is available at\\\url{https://github.com/antoinedimitriroux/Temporal-Matching-in-Link-Streams}}.
Experiments are run on a standard laptop clocking at 3,1 Ghz with DDR3 16Go memory.

\subsection{Dataset}
We carried out our experiments on two main types of datasets:
those that are randomly generated\footnote{Direct link to the GUI of the generator:\\\url{https://antoinedimitriroux.github.io}};
and those that are collected from human activities.

\textit{Artificially generated link streams, and stress test:}\\
In order to generate random sets of link stream instances, we adopt the more realistic point of view of random geometric graphs, rather than the classically theoretic Erd\"os-R\'enyi model, as follows.
Let $\cs$ be a 2D Euclidian space.
We define a particle as a point in space $\cs$.
Every particle is given along with a radius representing the maximum communication distance it can have with another particle.
Thus, the particle together with its range define a disk in space $\cs$.
Let $\cp$ be a set of particles, given along with the same value of radius.
We partition set $\cp$ into $n$ parts, $\cp = P_1\cup P_2\cup\dots\cup P_n$, of roughly equal size.
We will construct a link stream $L=(T,V,E)$ as follows.
Let $V=\{P_1,P_2,\dots,P_n\}$.
At time zero, let $E_0=\{(0,\{P_i,P_j\})~|~\exists a\in P_i\land b\in P_j, \textit{ the distance between $a$ and $b$ is less than their radius}\}$.
In other words, if there are at least two particles under communication range $a\in P_i, b\in P_j$ of different groups $P_i\neq P_j$ (at time zero), then, we consider there is a (zero-timed) edge between $P_i$ and $P_j$.
Every particle has a velocity that is defined as follows.
First, the velocity of a particle at time $t$ is a fraction of the velocity at time $t-1$ of that particle (friction).
Second, all velocity vectors are subject to a small random additional factor (random walk factor modeling the wind condition).
Finally, we truncate every velocity vector in order to insure that the norm of the vector is lower than a given maximum particle speed (physical limits).
We then let the system evolve during a given laps of time, that we also refer to as $T$.
At every time instant $t\in T$, we define, similarly as before, the $t$-timed edge set
 $E_t=\{(t,\{P_i,P_j\})~|~\exists a\in P_i\land b\in P_j, \textit{ the distance between $a$ and $b$ is less than their radius}\}$.
Finally, we define our link stream as
$E=\bigcup_{t\in T}E_t$.

Roughly, increasing any of the three parameters which are defined by the particle radius, the cardinality of $\cp$, and the maximum particle speed, results in the same effect on the generated link stream, that is, to produce a dense link stream.
The generated data allows us to unit test our code, and especially to verify that approximation and kernelization runtime is sound on large input.
For instance, with inputs containing hundreds of thousand timed edges, our runtime is under ten seconds, cf.\ Fig.~\ref{fig_stresstest}.
\begin{figure}
  \centerline{\hfill\includegraphics[width=0.45\textwidth]{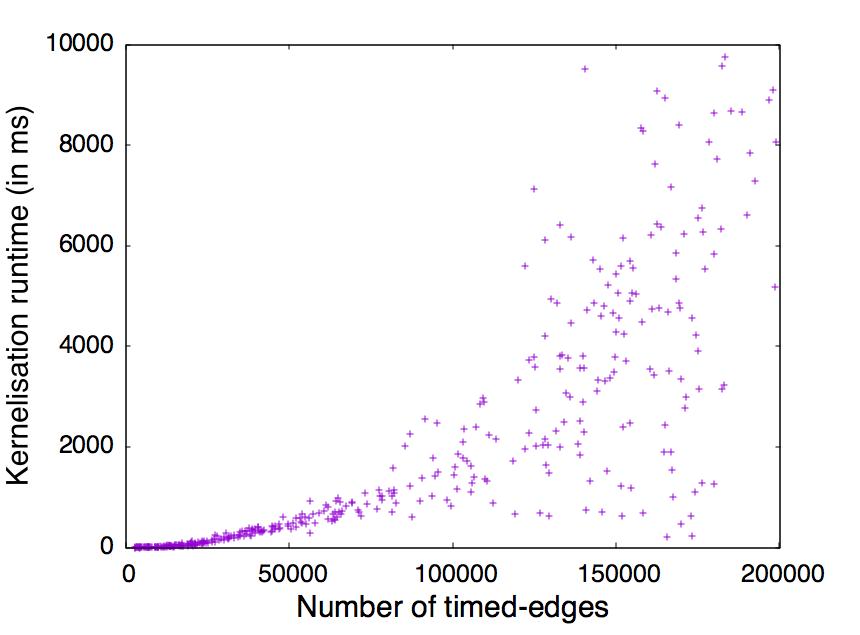}\hfill\hfill
                    \includegraphics[width=0.45\textwidth]{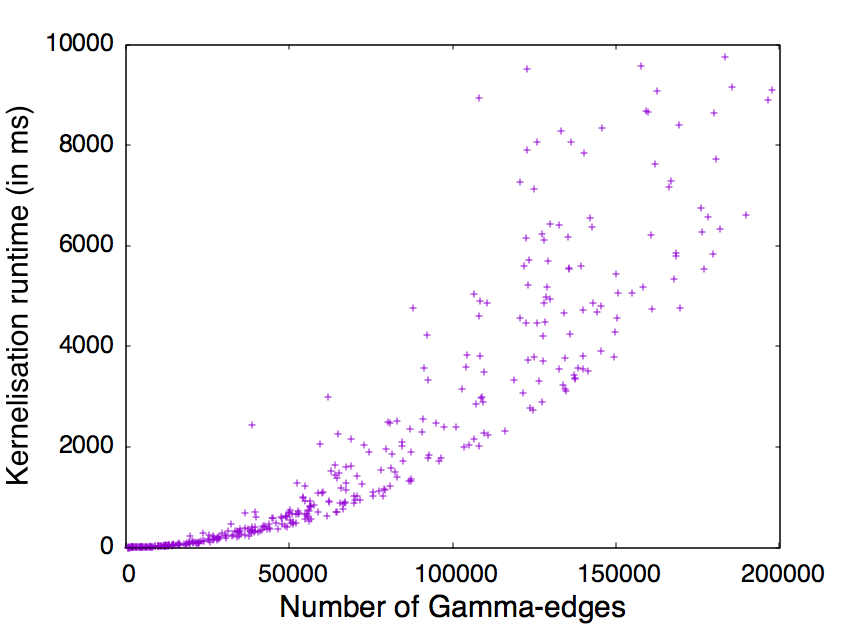}\hfill}
  \caption{\label{fig_stresstest} Stress-test on generated input, with $\gamma=5$.
  Though the number of $\gamma$-edges can be extremely high compared to timed edges (cf.\ overlapping $\gamma$-edges), we parametrised the generator so that they look the same in the left chart and the right chart.
  This is by no means a general property.}
\end{figure}
  
Interestingly, in the left chart of Fig.~\ref{fig_stresstest}, we note for some input with a large number of timed edges and no $\gamma$-edges that the runtime can be very quick.
We also discuss this phenomenon on real world dataset, in below Fig.~\ref{figure_enron_1day}~and~\ref{figure_rollernet_2hours}.
For this reason, it is much more interesting to examine the runtime as a function of the number of $\gamma$-edges of the input, cf.\ the right chart in Fig.~\ref{fig_stresstest}.
For easy eye-comparison, we tried to parametrise the generator in a way that most of the generated instances have roughly the same number of timed edges and $\gamma$-edges.
For instance, most instances with less than $100000$ timed edges also have roughly the same number of $\gamma$-edges.
However, the situation is more random for instances having between $100000$ and $200000$ timed edges.

\textit{Real world datasets, with cleaning methodology:}\\
We also confront the implementations of our algorithms to two particular link streams collected from real world graph data.
In one dataset the link stream is built from emailing information collected from the Enron company~\cite{KY04}.
The other dataset has been built by analysing a recording of $2\times80$ minute Rollerblade touring in Paris~\cite{TLBCDW09}.
Because of the long duration of these two experiments, the number of vertices (under two hundred persons in both cases) is negligible when comparing to the number of temporal vertices (nearly one million for Rollernet).
For a link stream $L=(T,V,E)$, we mostly compare $|T|$ with $|E|$ to get a glimpse on the density of the links.
For a complete view of $|T|$, $|V|$, and $|E|$ in the datasets, we refer the reader to Fig.~\ref{figure_enron_rollernet_generated_gamma2}.
Furthermore, we noticed with our raw datasets that the instants where some timed edge is present can be very sparse, leaving no chance for a $\gamma$-edge to exist as soon as $\gamma>1$.
For instance, in the Enron experiment (Fig.~\ref{figure_enron_1day}), we can see that there are no pair of employees who keep sharing $1$ mail per hour during $24$ hours, probably due to inactivity at night.
We will, for this reason, time-compress our raw datasets by the following process.
\begin{definition}[Data cleaning by time-compression]
\emph{For any link stream $L=(T,V,E)$ and for any $1<\delta<|T|$, we define the \textit{$\delta$-compression} $L_\delta=(T_\delta,V_\delta,E_\delta)$ as
$V_\delta=V$, $T_\delta=\llbracket \frac{\min T}{\delta}, \frac{\max T}{\delta} \rrbracket$, and
$$E_\delta=\{(t,\{u,v\})~|~\exists t'\in T: \delta t\leq t'<\delta(t+1) \land (t',\{u,v\})\in E\}.$$}
\end{definition}
\begin{figure}
\begin{center}
  \begin{scriptsize}
\begin{tabular}{|c|c|c|c|c|c|c|c}
\hline
$\delta$&$|T|$&$|E|$&$\gamma$&$|\gamma_E|$\\
\hline
\hline
$3600s=1h$&27300&21959&24&0\\
\hline
$7200s=2h$&13650&20962&12&0\\
\hline
$10800s=3h$&9100&20284&8&0\\
\hline
$14400s=4h$&6825&19732&6&16\\
\hline
$21600s=6h$&4550&19071&4&69\\
\hline
$28800s=8h$&3413&18402&3&335\\
\hline
$43200s=12h$&2275&17610&2&2667\\
\hline
\end{tabular}
  \end{scriptsize}
\caption{Enron dataset: number of timed edges and $\gamma$-edges after time-compression.
  Values are taken such that $\delta * \gamma = 24 hours$.
  In particular, we observe that Enron employee will not continuously share 1 mail per hour during 24 hours, since the company is closed at night.
  When compared to the number $|T|$ of time instants, the number $|V|$ of vertices is very small (under two hundred) and not presented here.}
\label{figure_enron_1day}
\end{center}
\end{figure}
\begin{figure}
\begin{center}
  \begin{scriptsize}
\begin{tabular}{|c|c|c|c|c|c|c|c}
\hline
$\delta$&$|T|$&$|E|$&$\gamma$&$|\gamma_E|$\\
\hline
\hline
$1s$&9977&403834&7200&0\\
\hline
$5s$&1996&127401&1240&0\\
\hline
$15$&666&77989&480&0\\
\hline
$30s$&333&60919&240&0\\
\hline
$60s=1m$&167&45469&120&0\\
\hline
$300s=5m$&34&22484&24&51\\
\hline
$600s=10m$&17&15808&12&357\\
\hline
$1200s=20m$&9&10735&6&1893\\
\hline
$1800s=30m$&6&8324&4&2745\\
\hline
$3600s=1h$&3&5000&2&3094\\
\hline
\end{tabular}
  \end{scriptsize}
\caption{Rollernet dataset: number of timed edges and $\gamma$-edges after time-compression.
  Values are taken such that $\delta * \gamma = 7200s = 2 hours$.
  In particular, we observe that every person in the Rollernet experiment has been away from another person for at least 1 minutes during 2 hours.
  When compared to the number $|T|$ of time instants, the number $|V|$ of vertices is very small (under on hundred) and not presented here.}
\label{figure_rollernet_2hours}
\end{center}
\end{figure}

Fig.~\ref{figure_enron_1day}~and~\ref{figure_rollernet_2hours} show that parameters $\delta$ and $\gamma$ have an important influence on the number of $\gamma-$edges. 
One can also notice from Fig.~\ref{figure_enron_1day}~and~\ref{figure_rollernet_2hours} that the compression process generally breaks down the number $|E|$ of timed edges in the dataset.
However, we stress that it is not necessarily the case: Fig.~\ref{explication_delta} exemplifies two different time-compressions of the same original link stream, respectively with $\delta = 3$ and $\delta = 4$, where it is possible to obtain more edges even if we have a larger $\delta$.
Nonetheless, the usual effect of time-compression is to drastically reduce the number of timed edges.
On Enron and Rollernet datasets, we need to ensure that time-compression does not result in empty inputs.
Luckily, for sensible values of $\delta$, e.g. $\frac{1}{2}$ week for Enron or $15$ minutes for Rollernet, there is still a large number of timed edges after $\delta$-compression, cf.\ Fig.~\ref{fig_timecompression}.
We give in the subsequent Fig.~\ref{figure_enron_rollernet_generated_gamma2}~and~\ref{fig_realtime} the runtime of our algorithm on random pieces of the two Enron and Rollernet datasets, where we observe that our runtime is very quick.
\begin{figure}
  \centerline{\includegraphics[width=.67\textwidth]{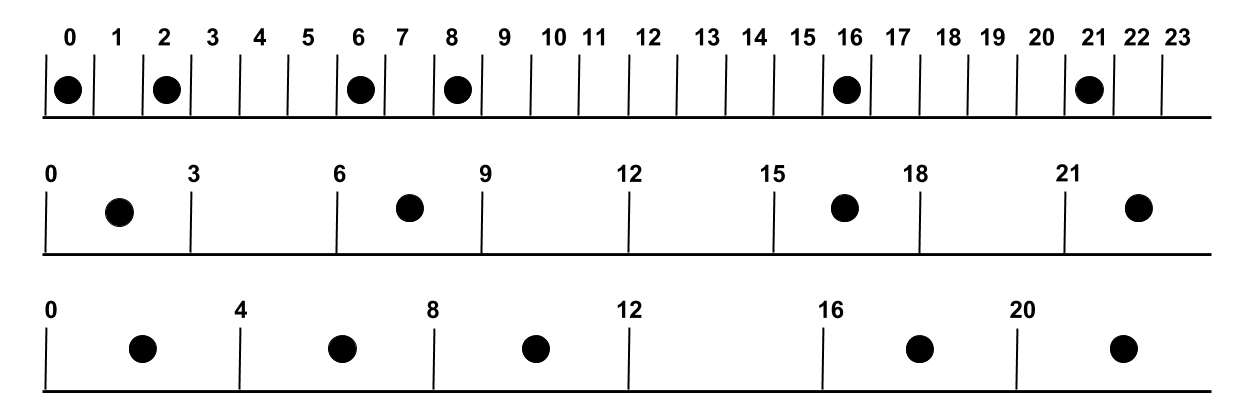}}
    \caption{\label{explication_delta} Time-compression with $\delta = 3$ and $\delta = 4$, resulting in $4$ and $5$ timed edges.}
\end{figure}
\begin{figure}
  \centerline{\hfill\includegraphics[width=0.45\textwidth]{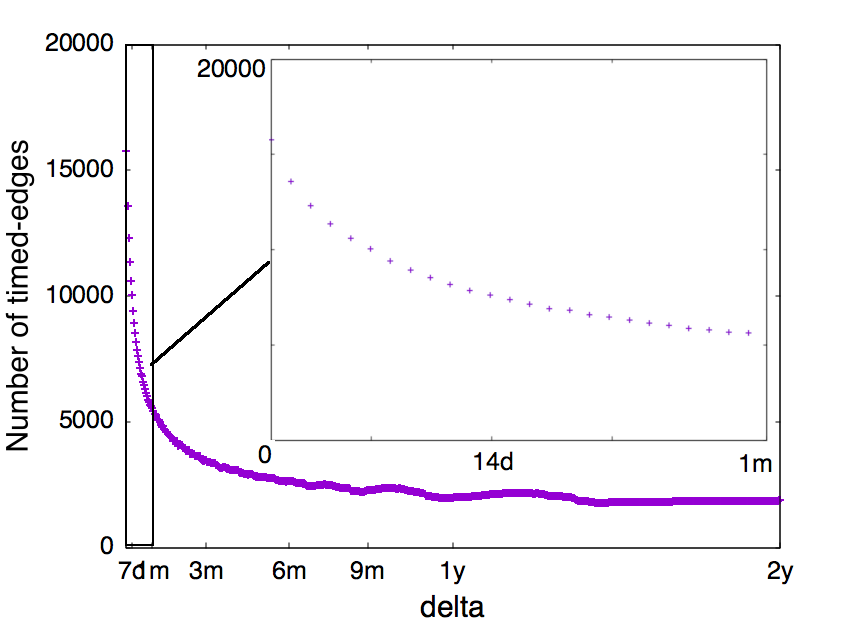}\hfill\hfill
                    \includegraphics[width=0.45\textwidth]{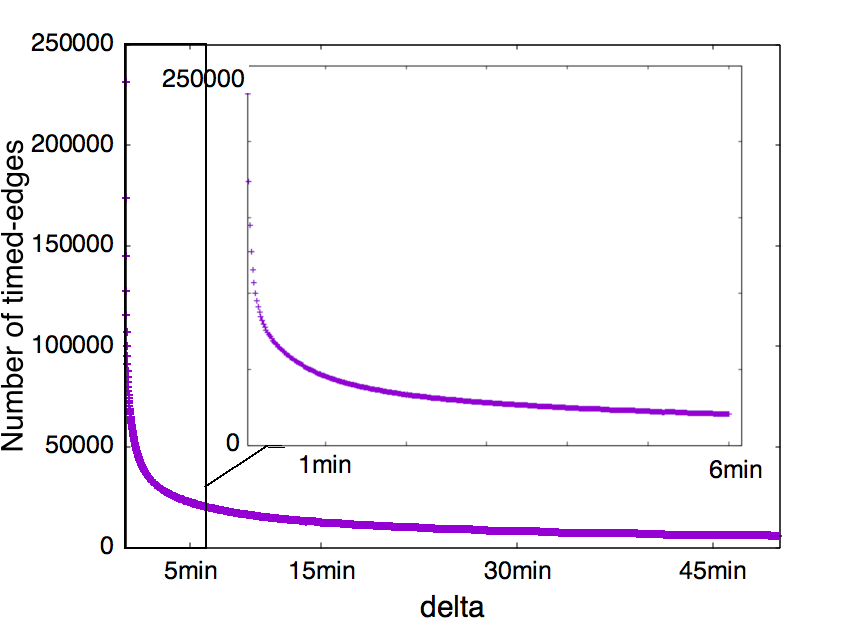}\hfill}
  \caption{\label{fig_timecompression} Enron dataset (left) and Rollernet dataset (right): remaining timed edges after $\delta$-compression.
  For instance, with $\delta=\frac{1}{2}$ week, the number of timed edges after $\delta$-compression Enron is over ten thousand.
  With $\delta=15$ minutes, the number of timed edges after $\delta$-compression Rollernet is also over ten thousand.
  We conclude that for sensible values of $\delta$, our time compression process does not break down the input to an empty instance.}
\end{figure}
\begin{figure}
\begin{center}
  \begin{tiny}
\begin{tabular}{|c|c|c|c|c|c|c|c|}
\hline
Link-Stream$_{\delta}$&$|V|$&$|T|$&$|E|$&$|\gamma_E|$&appr(s)&kern(s)&total(s)\\
\hline
\hline
$Enron_{1hour}$&150&27300&21959&1991&0.010&0.397&0.408\\
\hline
$Enron_{3hours}$&150&9100&20284&2695&0.018&0.694&0.696\\
\hline
$Enron_{1day}$&150&1138&16224&4416&0.007&1.76&1.774\\
\hline
$Enron_{3days}$&150&380&12868&4644&0.007&1.932&1.939\\
\hline
$Enron_{7days}$&150&163&10028&4812&0.005&1.173&1.179\\
\hline
$Enron_{30days}$&150&38&5573&2917&0.001&0.147&0.149\\
\hline
$Enron_{90days}$&150&13&3480&1650&6.6E-4&0.029&0.030\\
\hline
$Rollernet_{1min}$&61&167&45469&24009&0.098&1.696&1.794\\
\hline
$Rollernet_{2mins}$&61&84&33304&17089&0.047&0.561&0.609\\
\hline
$Rollernet_{5mins}$&61&34&22484&13346&0.018&0.140&0.158\\
\hline
$Rollernet_{15mins}$&61&12&12410&8544&0.005&0.044&0.050\\
\hline
$Rollernet_{30mins}$&61&6&8324&5979&0.005&0.032&0.038\\
\hline
$Rollernet_{1hour}$&61&3&5000&3094&0.001&0.007&0.008\\
\hline
$Generated$&10&50&684&83&4.4E-4&0.001&0.001\\
\hline
$Generated$&10&100&1384&136&4.8E-4&7.4E-4&0.001\\
\hline
$Generated$&10&200&2906&322&4.1E-4&0.002&0.002\\
\hline
$Generated$&20&50&2599&705&0.001&0.004&0.006\\
\hline
$Generated$&20&100&5326&1508&0.003&0.016&0.020\\
\hline
$Generated$&20&200&10667&2998&0.004&0.030&0.035\\
\hline
$Generated$&50&50&15842&6534&0.005&0.034&0.040\\
\hline
$Generated$&50&100&31113&12876&0.018&0.125&0.144\\
\hline
$Generated$&50&200&63032&26495&0.054&0.426&0.480\\
\hline
$Generated$&100&50&53665&32235&0.093&0.280&0.374\\
\hline
$Generated$&100&100&107524&65145&0.342&1.054&1.396\\
\hline
$Generated$&100&200&214728&130371&1.437&4.277&5.713\\
\hline
\end{tabular}
  \end{tiny}
\caption{Runtime required by our $2$-approximation algorithm, kernelization algorithm, and the total process.
  Values are taken for $\gamma=2$.
  The parameter $k$ for the kernelization algorithm is the size of the solution found by the $2$-approximation algorithm.
  The (rawly recorded) runtime is very quick, hence, probably subject to many noises.}
\label{figure_enron_rollernet_generated_gamma2}
\end{center}
\end{figure}
\begin{figure}
  \centerline{\hfill\includegraphics[width=0.45\textwidth]{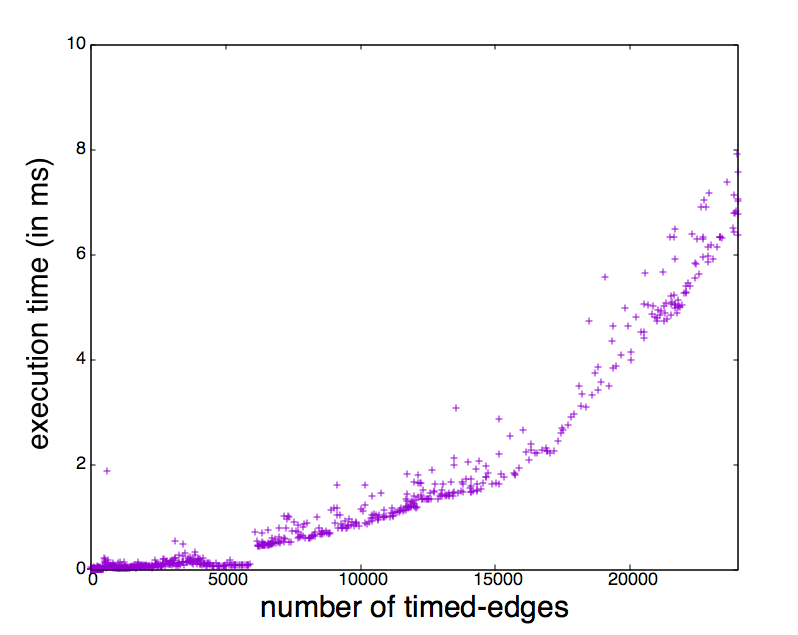}\hfill\hfill
                    \includegraphics[width=0.45\textwidth]{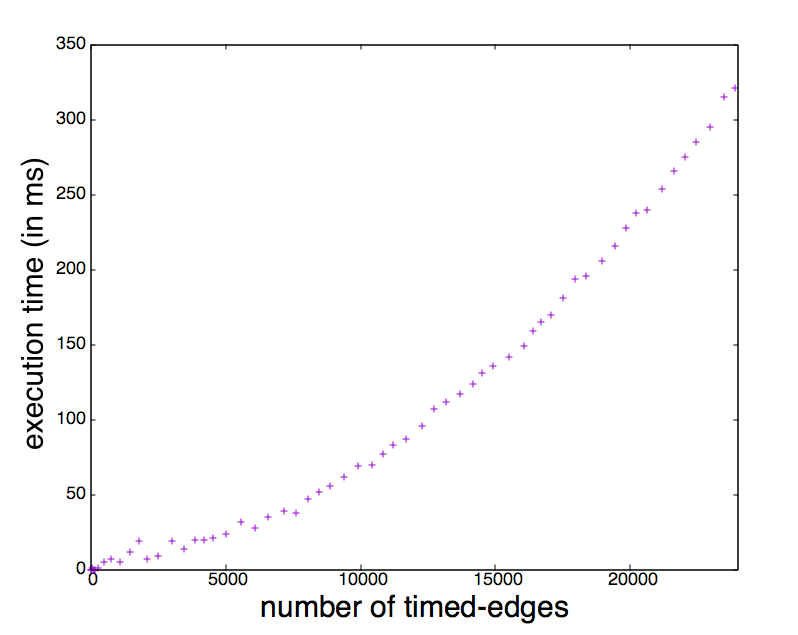}\hfill}
  \caption{\label{fig_realtime} Enron dataset (left) and Rollernet dataset (right): runtime of our algorithms in function of the number of timed edges, with $\gamma=2$.
  The parameter $k$ for kernelization is the size of the $2$-approximation result.
  The (rawly recorded) runtime is probably subject to many noises.
  We rather refer to Fig.~\ref{fig_stresstest} for evaluating performance.
  Each dot in current Fig.~\ref{fig_realtime} is obtained by first truncating the raw input with varying maximum value of time instants, then, $\delta$-compressing the so-obtained link streams with $\delta=100$.
  The only observation we make with this figure is that, on real world dataset, our combined runtime for both the $2$-approximation and the kernelization algorithms is very quick.}
\end{figure}

\subsection{Hypothesis}
We theorise three hypotheses.
For each hypothesis we run experiment on the above described datasets, and expose our results in the next Subsection~\ref{subsec_result}.
We discuss and conclude our numerical analysis in the subsequent and last Subsection~\ref{subsec_discussion} of current Section~\ref{sec:num}.

\textit{Hypothesis 1. Consistence of the formalism:}\\
We would like to verify that \textsc{$\gamma$-matching} is non-trivial on human values for $\delta$ and $\gamma$.
For instance, we suppose when mining emails that collaborating during a month at a rate of at least two emails per week (round trip) is sensibly human values.
When mining proximity records of $2\times 80$ minute Rollerblade touring Paris, we consider that collaborating during $80$ minutes at a rate of one visit every quarter hour (water/snack supplying) is sensibly human values.

\textit{Hypothesis 2. Kernelization quality:}\\
We reckon that, beside the hidden constants under the Landau notation, a space reduction from $O(n)$ to $O(k^2)$, when $k=O(\sqrt n)$ is also meaningless.
Unfortunately, when parameterizing by the size of the solution as we do in this paper, one very usually results in the situation where $k$ is numerically in the order of $\sqrt n$.
This is particularly true for our study of temporal matchings on Enron and Rollernet.
For this reason, our second hypothesis is that, in addition to a theoretical guarantee of reduction from $O(n)$ to $O(k^2)$ space complexity, solving \textsc{$\gamma$-matching} can numerically benefit from the kernelization algorithm described in Theorem~\ref{maintheorem}, at least on well-chosen and humanly sensible intervals of $\delta$ and $\gamma$.

\textit{Hypothesis 3. Approximation quality:}\\
We stress that \textsc{Matching} in a classical graph is polynomial.
Unfortunately, \textsc{$\gamma$-matching} in a link stream is NP-hard.
However, in practice, a lot of NP-complete problems are not difficult on datasets arising from human activities.
What's more, some such problems can be solved near-optimally by simple algorithms such as by a random or greedy approach, or a mix of both approaches, even on arbitrary inputs.
A popular example is \textsc{Coloring}~\cite{MR02}.
Accordingly, our last hypothesis is that, in practice, finding an optimal $\gamma$-matching need not to be difficult.
Moreover, we hypothesise that the greedy $2$-approximation described in Lemma~\ref{th:bot} can produce near-optimal $\gamma$-matching on real world dataset, as well as artificial datasets that mimic real word datasets.

\subsection{Result}
\label{subsec_result}
Both our $2$-approximation and kernelization algorithms are implemented and confronted to the above mentioned datasets.

\textit{Observations w.r.t.\ Hypothesis 1. Consistence of the formalism:}\\
Results are given in Fig.~\ref{fig_consistence}.
We observe 
on Enron dataset with $\delta\approx\frac{1}{2}$ week
that, after $\delta$-compression, the number of $\gamma$-edges for $\gamma$ varying from $2$ to $10$ is: over $500$ for $\gamma=10$; over $1000$ for $\gamma=6$; and over $4500$ for $\gamma=2$.
Moreover, we will see in the next paragraph that our numerical analysis can only find $\gamma$-matchings of size approximately one fifth of the previous number.
We also tried other techniques to improve the size of the $\gamma$-matchings but failed in finding substantial difference.
With the Enron dataset, we believe that \textsc{$\gamma$-matching} for $\gamma\approx 10$ is a tricky question when the time-compression rate is $\delta\approx\frac{1}{2}$ week.
These values translate the fact that any pair of collaborators in the $\gamma$-matching necessarily keep exchanging emails together continuously for one month, at a rate of at least one email per week and in average at least two emails every week.
\begin{figure}
  \centerline{\hfill\includegraphics[width=0.45\textwidth]{graphics/fig_a_enron_nombre_gamma_edges_combi.png}\hfill\hfill
                    \includegraphics[width=0.45\textwidth]{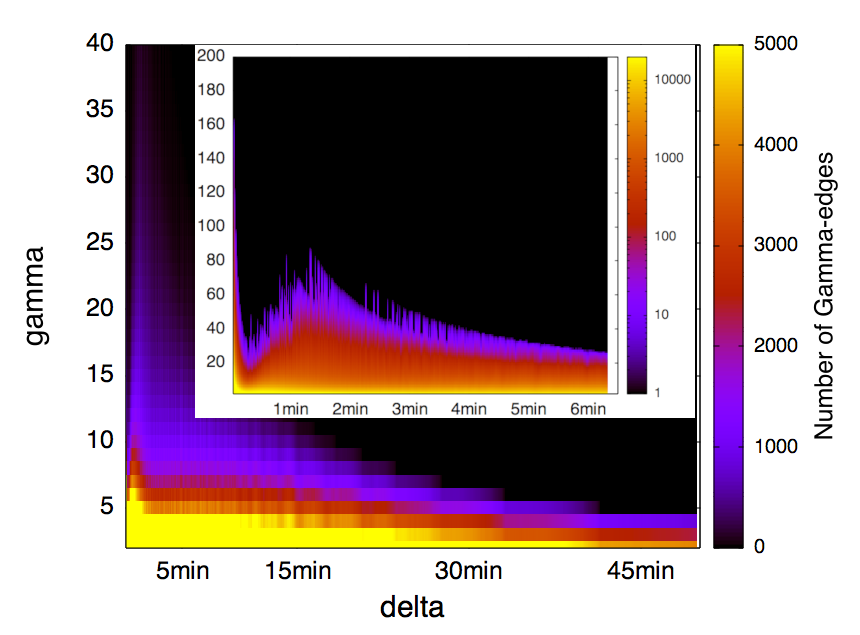}\hfill}

  \caption{\label{fig_consistence} Enron dataset (left) and Rollernet dataset (right): number of $\gamma$-edges after $\delta$-compression, for varying values of $\delta$ and $\gamma$.}
\end{figure}
\begin{figure}
  \centerline{\hfill\includegraphics[width=0.45\textwidth]{graphics/fig_c_enron_ratio_kernel_sur_gamma_L_combine.png}\hfill\hfill
                    \includegraphics[width=0.45\textwidth]{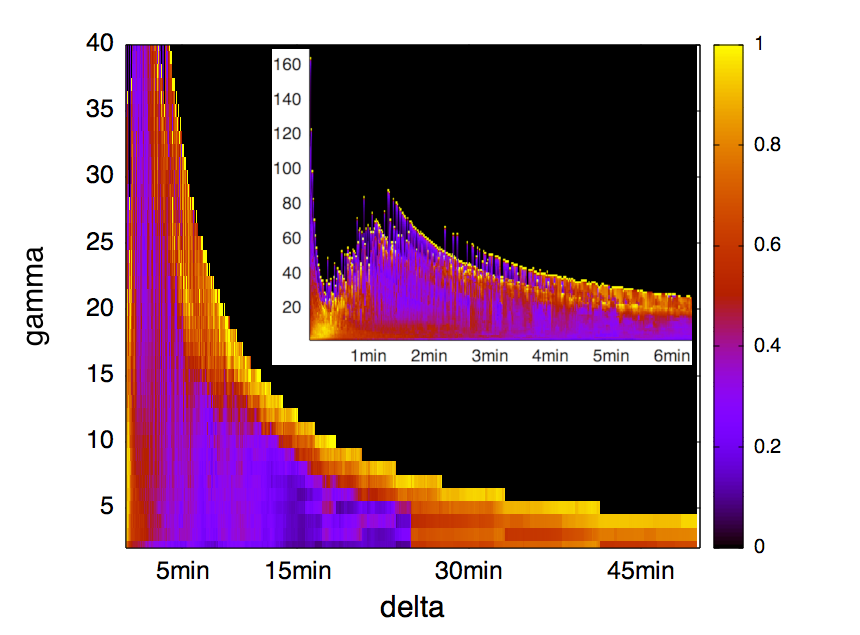}\hfill}

  \caption{\label{fig_kern_quality} Enron dataset (left) and Rollernet dataset (right): ratio obtained by dividing the number of $\gamma$-edges in the kernelization output by the number of $\gamma$-edges in the kernelization input (which is obtained after $\delta$-compression). Here, the darker is the better.
The parameter $k$ for the kernelization algorithm is the size of the solution found by the $2$-approximation algorithm.}
\end{figure}
\begin{figure}
  \centerline{\hfill\includegraphics[width=0.45\textwidth]{graphics/fig_e_enron_ratio_approx_sur_kernel_combi.png}\hfill\hfill
                    \includegraphics[width=0.45\textwidth]{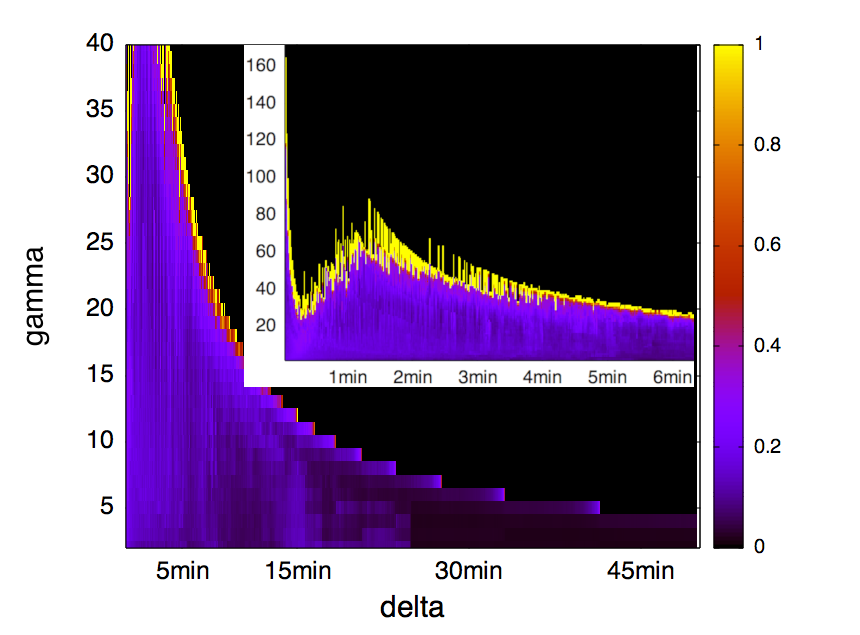}\hfill}

  \caption{\label{fig_approx_quality} Enron dataset (left) and Rollernet dataset (right): ratio obtained by dividing the $2$-approximation output by the number of $\gamma$-edges in the kernelization output. Here, the brighter is the better: the approximation solves \textsc{$\gamma$-matching} optimally on regions where the ratio is $100\%$, which is denoted by the yellow color.
  The parameter $k$ for the kernelization algorithm is the size of the solution found by the $2$-approximation algorithm.}
\end{figure}

\textit{Observations w.r.t.\ Hypothesis 2. Kernelization quality:}\\
Results are given in Fig.~\ref{fig_kern_quality}.
The parameter $k$ for the kernelization algorithm is the size of the solution found by the $2$-approximation algorithm.
We observe that on well chosen intervals of $\delta$ and $\gamma$, kernelization reduces the input size down to under twenty per cent.
This is particularly true for $\gamma\approx20, 30$ with $\delta\leq 2$ month on Enron;
and $\gamma\leq10$ with $\delta\leq20$ minutes for Rollernet.
We observe on these values that the kernelization algorithm reduces the input link stream to an equivalent instance of size smaller than $20\%$ the size of the original input, and sometimes under $10\%$.
We conclude that Hypothesis 2 is sound.

\textit{Observations w.r.t.\ Hypothesis 3. Approximation quality:}\\
Results are given in Fig.~\ref{fig_approx_quality}.
The parameter $k$ for the kernelization algorithm is the size of the solution found by the $2$-approximation algorithm.
We notice from definition that the $2$-approximation algorithm produces a lower bound for \textsc{$\gamma$-matching}, which is at least half the optimal value.
Moreover, the kernelization algorithm gives a naive upper bound for \textsc{$\gamma$-matching} by simply counting the number of $\gamma$-edges present in the kernel.
We observe for tangent areas in Fig.~\ref{fig_approx_quality} that these two upper and lower bounds meet.
This means that the $2$-approximation outputs an optimal solution for \textsc{$\gamma$-matching} on these areas.
However, we observe that for most parts of our dataset, Hypothesis 3 is not confirmed.
At this state, our experiments w.r.t.\ Hypothesis 3 fail in giving any clue for a conclusion.
Further experiments must be done in order to clarify this question.
Our feeling, however, is that Hypothesis 3 is generally false.
We conjecture that the $2$ approximation factor is far from optimal.

\subsection{Discussion}
\label{subsec_discussion}
Our experiment results are optimistic about the numerical usefulness of kernelization in finding temporal matching.
At the same time, they also point out several questions.
While our experiments allow us to observe that:
\begin{itemize}
  \item preprocessing an instance of \textsc{$\gamma$-matching} by a greedy process, and then kernelization as described in Theroem~\ref{maintheorem}, seems to be sound;
  \item the preprocessing is very quick on real world input;
  \item the preprocessing is robust versus stress testing on large inputs using common laptop: below ten seconds on input of hundreds thousand timed edges;
\end{itemize}
they also testify that works still need to be done for further investigating \textsc{$\gamma$-matching}, especially that:
\begin{itemize}
  \item the optimisation problem is NP-hard;
  \item numerical proofs of optimality of the $2$-approximation are only available for very marginal bits of data in the Enron and Rollernet datasets;
  \item while it is true that the kernelization algorithm helps in reducing the input down to $10-20\%$ for interesting mining parameters on Enron and Rollernet datasets, we still do not know how then to find a $\gamma$-matching of the kernel that is better than the output of the $2$-approximation;
  \item in particular, we do not know if the approximation factor can be improved.
\end{itemize}

\section{Conclusion and perspectives}\label{sec:c}
We introduce the notion of temporal matching in a link stream.
Unfortunately, the problem of computing a temporal matching of maximum size, called \textsc{$\gamma$-matching}, turns out to be NP-hard.
We then show a kernelization algorithm for \textsc{$\gamma$-matching} parameterized by the size of the solution.
Our process produces quadratic kernels.
On the way to obtaining the kernelization algorithm, we also provide a $2$-approximation algorithm for \textsc{$\gamma$-matching}.
We believe that the same techniques extend to a large class of hitting set problems in link streams.

\medskip

\noindent{\bf\textit{Acknowledgements:}}
We are grateful to Cl\'emence Magnien for helpful pointers and valuable pieces of advice.
We are grateful to the anonymous referees for helpful comments which greatly improve the paper.
In particular, we are grateful to the referee who coined the name of time-compression, whose usage greatly improves the paper.
  For financial support, we are grateful to:
  \textit{Thales Communications \& Security}, project TCS.DJ.2015-432;
  \textit{Agence Nationale de la Recherche Technique}, project 2016.0097;
  \textit{Centre National de la Recherche Scientifique}, project INS2I.GraphGPU.

\bibliography{main}
\bibliographystyle{plain}

\end{document}